\documentclass[final]{IEEEtran}          
\usepackage{graphicx,multirow,booktabs,threeparttable,lscape}
\usepackage{amssymb,amsmath,amsbsy}
\usepackage{algorithm}
\usepackage{algorithmic}
\usepackage{graphicx}
\usepackage[caption=false,font=footnotesize]{subfig}
\usepackage{CJKutf8}
\usepackage{array}
\usepackage{multirow}
\usepackage{url}
\usepackage{verbatim} 
\newcommand\norm[1]{\left\lVert#1\right\rVert}

\def\thesection{\Roman{section}}

\newtheorem{lemma}{Lemma}

\begin{document}

\title{DNN-Assisted Near-Optimal Codebook Design for Uplink SCMA Schemes over Rician Fading Channels}
\author{Wasif Ali, {\it Student Member}, Yen-Ming Chen, {\it Member}, {\it IEEE}, and Fan-Shuo Tseng, {\it Senior Member}, {\it IEEE}
\thanks{This work is supported in part by National Science and Technology Council, Taiwan, under grants NSTC 115-2221-E-A49-081 and NSTC 115-2640-E-110-003-, and in part by the Sixth Generation Communication and Sensing Research Center funded by the Higher Education SPROUT Project, the Ministry of Education of Taiwan.
{\it (Corresponding Author: Yen-Ming~Chen).}}
\IEEEcompsocitemizethanks{\IEEEcompsocthanksitem Wasif Ali and Fan-Shuo Tseng are with the Institute of Communications Engineering, National Sun Yat-sen University, Kaohsiung 80424, Taiwan
	(e-mail: wasifali45nsyu@gmail.com; fs.tseng@mail.nsysu.edu.tw).}
\IEEEcompsocitemizethanks{\IEEEcompsocthanksitem Yen-Ming~Chen is with the Institute of Communications Engineering, National Yang Ming Chiao Tung University, Hsinchu 300093, Taiwan (e-mail: emerychen.cm@gmail.com).
}
}
\maketitle

\begin{abstract}
Sparse Code Multiple Access (SCMA) is a promising non-orthogonal multiplexing technique for enhancing connectivity and spectral efficiency in next-generation communication networks. This paper proposes near-optimal uplink SCMA codebook designs based on a novel deep neural network (DNN)-assisted framework. The DNN is trained in an unsupervised manner, with a loss function derived from the structural characteristics of the generated codebooks, thereby eliminating the need for labeled training data. Based on error-probability analysis, Rician-factor-dependent design criteria are derived for uplink Rician fading and used to formulate the loss function. The resulting criteria naturally encompass AWGN and Rayleigh fading as limiting cases, providing a unified framework for codebook design. Moreover, upper and lower bounds on the average mutual information (AMI) are derived, providing an information-theoretic interpretation of the PEP-based criterion and establishing its asymptotic consistency with the AMI-lower-bound-based criterion in the high-SNR regime. The analysis further reveals how the dominant design factors evolve as the channel transitions from Rayleigh fading to AWGN. Numerical results show that the proposed near-optimal codebooks achieve significant performance gains over representative benchmark schemes across a wide range of Rician factors and exhibit excellent error-rate performance in the error-floor region, making them particularly attractive for ultra-reliable applications.

\end{abstract}
\begin{IEEEkeywords}
Sparse code multiple access (SCMA), non-orthogonal multiple access
(NOMA), uplink Rician channel, codebook design, deep neural network (DNN), deep learning.
\end{IEEEkeywords}

\section{Introduction}
\IEEEPARstart{I}{n} recent years, non-orthogonal multiple access (NOMA) has attracted significant research attention due to its potential to improve spectral efficiency, support massive connectivity, and accommodate heterogeneous quality-of-service requirements, making it a promising candidate for beyond-5G and 6G systems \cite{6G}\cite{NOMA}. 
One representative example is power-domain NOMA (PD-NOMA) \cite{PD_NOMA}, where multiple user equipments (UEs) share the same resource elements (REs) by transmitting at different power levels and are separated at the receiver using successive interference cancellation (SIC).

In contrast, code-domain NOMA techniques multiplex the signals of different UEs through user-specific signatures or sparse code structures. One representative example is the low-density signature (LDS) scheme proposed in \cite{LDS_1} and \cite{LDS_2} for synchronous code-division multiple access (CDMA) systems. Among various code-domain NOMA techniques, sparse code multiple access (SCMA) \cite{SCMA1} has attracted particular interest because it integrates modulation and spreading into a unified framework. Specifically, each UE is assigned a carefully designed codebook, such that its information bits are directly mapped to sparse multidimensional codewords. This multidimensional mapping provides shaping gain and is regarded as one of the main factors contributing to the performance improvement over the repetition-based symbol mapping adopted in LDS \cite{LDS_1}.


A key factor affecting SCMA performance is codebook design, which has been extensively studied \cite{star}-\cite{VM}. In \cite{star}, the design is decomposed into factor-graph, rotation-angle, and mother codebook (MCB) optimization, with the inner-to-outer ring ratio of a star-QAM constellation adjusted to generate UE-specific signal points. In \cite{cpa}, codebooks are obtained by maximizing the constrained-input channel capacity using fixed pulse-amplitude modulation (PAM) constellations, whereas \cite{GA} directly optimizes the codebooks via a genetic algorithm without requiring a predefined mother constellation.

More recently, our previous work \cite{Proddis} developed high-performance SCMA codebooks and corresponding design criteria for several channel conditions, although direct optimization of the signal constellations was not fully addressed. For additive white Gaussian noise (AWGN) channels, \cite{SU} optimized random constellations by maximizing the  {\it minimum squared Euclidean distance} (MSED) among superimposed signal-vector pairs through biconvex optimization, while \cite{DE}, \cite{PI}, and \cite{LP} imposed specific mother-codeword structures to reduce the search space before applying various optimization algorithms. For uplink Rayleigh fading, \cite{RL-SCMA} employed reinforcement learning, treating signal-value selection as an action and the metric in \cite{Proddis} as the reward. These studies enable joint optimization of signal sub-constellations, but SCMA codebook design remains challenging for large-scale settings and practical fading environments beyond AWGN and Rayleigh channels. Moreover, most existing approaches rely on problem-specific optimization procedures, motivating a more flexible and systematic framework applicable to different fading environments.


Compared with the idealized AWGN channel and the purely non-line-of-sight (NLOS) Rayleigh fading channel, the Rician fading channel provides a more practical propagation model for many wireless environments by jointly capturing deterministic LOS and random multipath components. Such conditions commonly arise in short-range wireless links, relay-aided transmissions, unmanned aerial vehicle (UAV) communications, satellite communications, and various beyond-5G/6G applications. Therefore, the design of communication systems over Rician fading channels is of considerable practical importance. For SCMA, codebook design becomes more challenging because variations in the Rician factor change the relative strengths of the LOS and NLOS components, resulting in different channel characteristics that must be properly reflected in the codebook design.


Although several studies have considered SCMA codebook design for downlink Rician fading channels \cite{SCMA_NTN}, \cite{LP}, extension to the uplink case is nontrivial. In downlink transmission, the superimposed codeword on each RE is affected by a common channel coefficient, whereas in uplink SCMA, signals from different users experience independent fading coefficients. Under Rician fading, the coexistence of LOS and scattered components in these independently faded user signals further complicates the error analysis and the derivation of tractable design criteria. To the best of our knowledge, \cite{Tauf} is the only work that explicitly derives design criteria for uplink Rician SCMA. However, its design metrics do not explicitly depend on \(\kappa\), which may limit their effectiveness across different LOS conditions. Therefore, a unified design framework for uplink Rician SCMA over varying Rician factors remains to be established.



More broadly, recent advances in holographic multi-input multi-output (MIMO) have highlighted the increasingly rich spatial propagation characteristics expected in future 6G systems, where large and densely spaced antenna arrays may require unified near- and far-field channel representations as well as both angular and distance-dependent channel information \cite{Holographic_MIMO_1}\cite{Holographic_MIMO_2}. These developments further emphasize the importance of flexible channel-aware signal design under diverse propagation environments.

Benefiting from advances in big data, optimization algorithms, and computing resources, deep neural networks (DNNs) have been widely applied in various fields, including computer vision \cite{DNN_speach}, game playing \cite{DNN_game}, and speech processing \cite{DNN_computer}. DNNs have also been introduced into communication system design, with applications such as channel decoding \cite{DNN_decoder_1}, sparse signal recovery \cite{DNN_sparse_1}, end-to-end receiver design \cite{DNN_system_2}, and MIMO detection \cite{DNN_system_3}. In the context of SCMA system design, a multi-output DNN-based classification model was proposed in \cite{SCMA_DNN_receiver} for SCMA receivers, aiming to reduce detection complexity while maintaining satisfactory error-rate performance compared with conventional message passing algorithm (MPA) based solutions \cite{MPA} and \cite{MPA2}. On the other hand, end-to-end DNN-based SCMA design approaches have been investigated using denoising autoencoders (DAEs) \cite{deep_scma}-\cite{deep_scma_2}.
In these approaches, the encoding network is constructed to emulate the SCMA encoding process, where input bits are mapped to multidimensional codeword vectors and then superimposed across multiple users. The decoding network, in turn, models the receiver-side detection and decoding process. Random noise and channel coefficients are introduced between the encoding and decoding networks, and the loss function is defined based on end-to-end transmission simulations by comparing the original information bits with the decoded output bits.


Motivated by these developments, this work investigates the design of near-optimal SCMA codebooks over uplink Rician fading channels using a DNN-assisted framework. The main contributions of this paper are summarized as follows:

\begin{enumerate}
    \item {\it DNN-assisted Codebook Construction Framework:}
    A DNN-assisted framework is proposed for SCMA codebook construction under different fading environments. In contrast to end-to-end designs, the proposed network focuses only on transmitter-side codebook design and is trained in an unsupervised manner, with loss functions formulated from the structural characteristics of the generated codebooks. The codebook entries are directly parameterized by the DNN outputs and can be jointly optimized through gradient-based updates, providing a unified framework for multi-codebook design without requiring labeled training data.
    \item {\it Unified Design Criteria for Rician Fading Channels:}
    Codebook-design criteria are derived for uplink Rician fading channels with explicit dependence on the Rician factor. The resulting criteria naturally encompass Rayleigh and AWGN channels as limiting cases and reveal how the dominant design factors evolve with the relative strengths of the LOS and NLOS components.
    \item {\it AMI-Based Interpretation of the PEP-Based Design Criterion:}
    Upper and lower bounds on the average mutual information (AMI) are derived to provide an information-theoretic interpretation of the proposed PEP-based design criterion. High-SNR analysis further establishes the asymptotic consistency between the PEP-based criterion and that implied by the AMI lower bound.
    \item {\it Loss Function Formulation Across Different Rician Factors:}
    A log-domain loss formulation is developed to reliably account for different error events across Rician factors. Higher-order error events that become non-negligible in the strong-LOS regime are further incorporated through an extended loss formulation without exhaustive enumeration of all signal-matrix pairs.
    \item {\it Near-optimal SCMA Codebooks Across Various Channel Conditions:}
    The proposed framework generates near-optimal codebooks for AWGN, Rayleigh, and general Rician fading channels. Compared with representative benchmarks, the resulting designs achieve significant SNR gains over a wide range of Rician factors, particularly in the error-floor region, demonstrating their potential for ultra-reliable SCMA applications.
\end{enumerate}

To further clarify the distinction from our previous studies, \cite{Proddis} developed separate codebook-design criteria for AWGN and Rayleigh fading channels, whereas the present work derives a unified \(\kappa\)-dependent criterion for uplink Rician fading. Moreover, unlike the single-codebook action-based optimization adopted in \cite{RL-SCMA}, the proposed DNN-assisted framework jointly optimizes multiple UE codebooks via gradient-based updates. Our previous MIMO-SCMA study \cite{mimo_scma} primarily extended the methodology in \cite{Proddis} to MIMO-SCMA systems and additionally addressed receiver design, without considering the Rician-dependent design problem studied herein.


{\it Notation:} Matrices and vectors are denoted by boldface uppercase and lowercase letters, respectively.
$\mathbb{B}$, $\mathbb{C}$, and $\mathbb{R}$ denote the binary, complex, and real domains, respectively. $\mathcal{CN}(\mu, 2\sigma^2)$ denotes a circularly symmetric complex Gaussian random variable with mean $\mu$ and variance $2\sigma^2$.
The superscript $^{\top}$ denotes matrix transpose.
The notation $\norm{\cdot}$ represents the Frobenius norm, and $\displaystyle \mathop{\mathrm{E}}[\cdot]$
denotes the expectation
operator.

\section{Preliminaries}

In this paper, we focus on uplink SCMA systems comprising $K$ orthogonal REs and $L$ UEs, under the assumption that $K < L$. Each UE is allowed to access $d_\mathrm{v}$ REs, where $d_\mathrm{v} < K$, and each RE is shared by $d_\mathrm{f}$ UEs, where $d_\mathrm{f} < L$. Let $M$ denote the total number of codewords in an SCMA codebook. Accordingly, the SCMA encoding for the $l$-th UE is defined by the function
\begin{equation}
\begin{split}
\label{mapping_1}
 f_l:\mathbb{B}^{\log_2M\times1}\mapsto \mathcal{X}_l, \mbox{  i.e., }\mathbf{x}_l=f_l(\mathbf{u}_l),
\end{split}
\end{equation}
where $\mathbf{u}_l\in\mathbb{B}^{\log_2M\times1}$ denotes the
binary information vector for the $l$-th UE, and
$\mathbf{x}_l\in\mathbb{C}^{K\times1}$ denotes the multi-dimensional SCMA codeword
selected from codebook $\mathcal{X}_l$ for the $l$-th UE. If the power-balanced constraint is imposed, $\displaystyle \mathop{\mathrm{E}}[\parallel \mathbf{x}_l\parallel^2]=d_\mathrm{v}$.

In an uplink system, the $K \times 1$ received signal vector at the base station can be represented as
\begin{equation}
\begin{split}
\label{UP_received}
\mathbf{r} = \sqrt{\frac{E_{\mathrm{s}}}{d_{\mathrm{f}}}}\sum\limits^L_{l=1}\mathrm{diag}\{\mathbf{h}_l\}{\mathbf{x}_l}+\mathbf{n},
\end{split}
\end{equation}
where $\mathbf{h}_l\in \mathbb{C}^{K\times1}$ denotes the channel
vector from the $l$-th UE to the BS and $E_{\mathrm{s}}$ denotes the average energy of the received signal at a single RE.
Additionally, the vector $\mathbf{n} \in \mathbb{C}^{K \times 1}$ represents the additive noise at the base station, with each entry independently drawn from $\mathcal{CN}(0, N_\mathrm{0})$.

Since only $d_\mathrm{v}$ REs are available to the $l$-th UE, the
signal vector $\mathbf{x}_l$ is sparse, containing only $d_\mathrm{v}$ non-zero entries. 
Let $\mathbf{c}_l$ be a
$d_\mathrm{v}$-dimensional complex vector selected from the mother
codebook $\mathcal{C}_l\in \mathbb{C}^{d_\mathrm{v}\times1}$ following
\begin{equation}
\begin{split}
\label{mapping_2}
 g_l:\mathbb{B}^{\log_2M\times1}\mapsto \mathcal{C}_l, \mbox{  i.e., }\mathbf{c}_l=g_l(\mathbf{u}_l).
\end{split}
\end{equation}
Accordingly, we re-write the SCMA encoding process given in (\ref{mapping_1}) as
\begin{equation}
\begin{split}
\label{mapping_3}
f_l:\equiv \mathbf{V}_lg_l, \mbox{  i.e., }\mathbf{x}_l=\mathbf{V}_l\mathbf{c}_l,
\end{split}
\end{equation}
where the $K\times d_\mathrm{v}$ mapping matrix $\mathbf{V}_l$ is used to transfer a $d_\mathrm{v}$-dimensional
complex mother codeword to a $K$-dimensional
complex SCMA codeword, when considering the $l$-th
UE. It is worth noting that, unlike the definition in \cite{Proddis}, we omit the phase rotation factors originally included in $\mathbf{V}_l$, as these are implicitly handled within the DNN-assisted design process of $\mathcal{C}_l$. As a result, $\mathbf{V}_l$ solely functions to map the signals in $\mathbf{c}_l$ to the REs allocated to the $l$-th UE.
If we consider the case where $K=4$, $L=6$,
$d_\mathrm{v}=2$, and $d_\mathrm{f}=3$, and employ a $K\times L$
factor graph matrix
\begin{equation}\label{spase}
\begin{split}
\mathbf{F}=\left[
\begin{matrix}
1 & 1 & 1 & 0 & 0 & 0 \\
1 & 0 & 0 & 1 & 1 & 0 \\
0 & 1 & 0 & 1 & 0 & 1 \\
0 & 0 & 1 & 0 & 1 & 1 \\
\end{matrix}
\right],
\end{split}
\end{equation}
then $\mathbf{F}$
specifies the positions of the REs allocated to different
UEs. For example, the first column of $\mathbf{F}$ indicates that UE 1 is allowed to access the second and fourth REs. Consequently, the matrix $\mathbf{V}_l$ for UE $l$ can be derived from the $l$-th column of $\mathbf{F}$. Specifically, for UE 1 and UE 2, we have
\begin{equation}\label{V_matrix}
\mathbf{V}_1=\left[
\begin{matrix}
1 & 0\\
0 & 1\\
0 & 0\\
0 & 0\\
\end{matrix}\right]\mbox{ and }\mathbf{V}_2=\left[
\begin{matrix}
1 & 0\\
0 & 0\\
0 & 1\\
0 & 0\\
\end{matrix}\right].
\end{equation}
Lastly, for the receivers, we consider the conventional MPA receiver reported in \cite{MPA} and \cite{MPA2}.

\section{DNN-assisted codebook Construction for Near-optimal SCMA Schemes}

Considering the previously proposed SCMA codebook construction methods \cite{Proddis}, although near-optimal performance can potentially be achieved, the design complexity increases rapidly with the number of users and modulation levels. Consequently, these methods are more suitable for small-scale SCMA systems. In contrast, recent advancements in artificial intelligence and computational resources have enabled the use of neural networks to approximate complex optimization problems and approach near-optimal solutions via gradient-based methods. In this paper, we propose a flexible DNN-assisted SCMA codebook design approach that aims to achieve near-optimal error performance with reduced design complexity under different fading environments.



\begin{figure*}[!t]
	\centering
	\includegraphics[width = 7.2in]{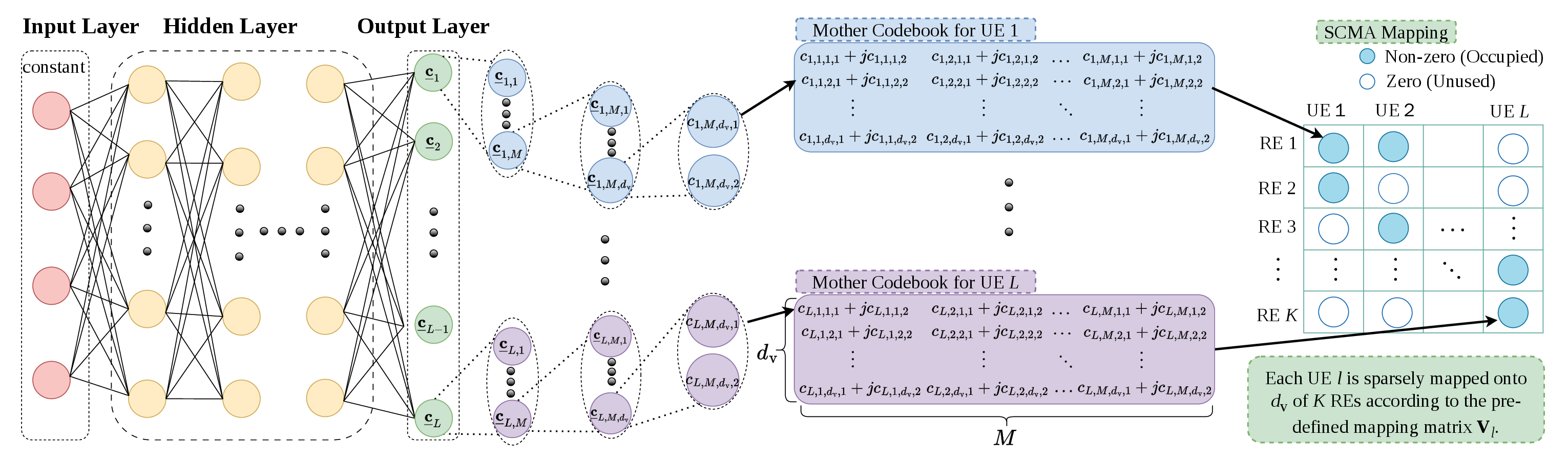}
	\caption{Overview of the proposed DNN-assisted SCMA codebook generation framework.}
	\label{dnn_ul}
\end{figure*}

\subsection{Proposed DNN-assisted Design Approach for the SCMA Codebook Design}
A simple DNN-based architecture
is adopted for SCMA codebook construction. Consider a basic DNN framework consisting of $Z-1$ fully connected hidden layers. The $z$-th hidden layer is characterized by a weight matrix $\mathbf{W}_{\mathrm{h},z}$, a bias vector $\mathbf{b}_{\mathrm{h},z}$, and a nonlinear activation function $f(\cdot)$, where $z=1,2,\ldots,Z-1$.
A rectified linear unit (ReLU), denoted by $f(a) = \max(0, a)$ for an arbitrary input $a$, serves as the activation function. Consequently, the desired output neuron sequence $\underline{\mathbf{c}}$ generated from a DNN can be expressed as
\begin{equation}
	\underline{\mathbf{c}}=\mathbf{W}_{\mathrm{out}} f(\cdots f(\mathbf{W}_{\mathrm{h},2} f (\mathbf{W}_{\mathrm{h},1}\mathbf{a}+\mathbf{b}_{\mathrm{h},1})+\mathbf{b}_{\mathrm{h},2})\cdots) + \mathbf{b}_{\mathrm{out}},
\end{equation}
where $\mathbf{a}$ denotes the $I_\mathrm{a} \times 1$ input vector and $\underline{\mathbf{c}}$ denotes the $I_{\mathrm{out}} \times 1$ output vector. We further assume that $I_{\mathrm{h},z}$ nodes are contained in each hidden layer. Accordingly, we have $\mathbf{W}_{\mathrm{h},1}\in \mathbb{R}^{I_{\mathrm{h},1}\times I_\mathrm{a}}$, $\mathbf{b}_{\mathrm{h},1}\in \mathbb{R}^{I_{\mathrm{h},1}\times 1}$, $\mathbf{W}_{\mathrm{h},z}\in \mathbb{R}^{I_{\mathrm{h},z}\times I_{\mathrm{h},z-1}}$ and $\mathbf{b}_{\mathrm{h},z}\in \mathbb{R}^{I_{\mathrm{h},z}\times 1}$, $z=2,\cdots,Z-1$. For the output layer, we assume $\mathbf{W}_{\mathrm{out}}\in \mathbb{R}^{I_{\mathrm{out}}\times I_{\mathrm{h},Z-1}}$ and $\mathbf{b}_{\mathrm{out}}\in \mathbb{R}^{I_{\mathrm{out}}\times 1}$. The values of $I_\mathrm{a}$ and $I_{\mathrm{h},z}$ are determined manually, while the value of $I_{\mathrm{out}}$ depends on the sizes of codebooks to be constructed. In the implementation, we adopt a progressively contracting topology with hidden-layer widths given by $(32,16,8,4,2)I_{\mathrm{out}}$. These widths are implementation hyperparameters and are not specified by the SCMA formulation.

%

\begin{table}[!t]
	\centering
	\caption{DNN architecture parameters and optimization configuration}
	\label{tab_dnn_settings}
	\renewcommand{\arraystretch}{1.2}
	\setlength{\tabcolsep}{6.5pt}
	\begin{tabular}{ll}
		\hline
		\textbf{Parameter} & \textbf{Setting} \\
		\hline
		Network type & Fully connected DNN \\
		Input & Fixed $\mathbf{a}=\mathbf{1}\in\mathbb{R}^{4}$ \\
		Output size & $I_{\mathrm{out}}=2Jd_{\mathrm{v}}M$ \\
		Number of Hidden Layers & 5\\
		Width of Hidden layers & $(32,16,8,4,2)I_{\mathrm{out}}$ \\
		Activations & ReLU (hidden layers); Linear (output layer) \\
		Initialization & Kaiming uniform \\
		Optimizer & Adam \\
		Learning rate & $\eta=10^{-3}$ \\
		Batch setting & Single fixed input \\
		Objective & Proposed PEP-based loss \\
		Parameter update & Backpropagation through DNN \\
		Training data & Not required \\
		Model selection & Minimum-loss parameters \\
		Termination & Convergence or $E_{\max}$ iterations \\
		\hline
	\end{tabular}
\end{table}

In a conventional DNN architecture, the loss function is typically used to quantify the difference between the network outputs and the corresponding target values. Therefore, the availability of reference labels or predefined ``correct answers" is usually required for supervised training.
Existing DAE-based design methodologies \cite{deep_scma}-\cite{deep_scma_2} provide an effective end-to-end learning framework for communication system design. Nevertheless, their direct application to SCMA codebook optimization may involve several practical challenges. In particular, these methods usually require joint optimization of both the transmitter and receiver, resulting in a relatively large neural network and increased training complexity. Moreover, when the actual bit error ratio (BER) is adopted as the loss function, extensive Monte Carlo simulations are typically required at each backpropagation step, which can substantially increase the training time. This issue becomes more pronounced for large-scale SCMA systems, where an efficient design procedure is highly desirable.


Given the effectiveness of conventional MPA detectors for sparse codebooks, this work retains the MPA detector at the receiver and focuses instead on DNN-based SCMA codebook design. Accordingly, the proposed DNN is used solely as a parameterized SCMA codebook generator, rather than an end-to-end neural communication system. The architecture of the proposed DNN-assisted framework is shown in Fig.~\ref{dnn_ul}. Unlike conventional DNN architectures, the input vector $\mathbf{a}$ is set as a fixed arbitrary constant and remains unchanged throughout training. As detailed in the following subsections, the proposed loss functions are formulated based on analytical error-performance analysis, rather than directly defined in terms of the actual BER, and do not require reference labels. Therefore, $\mathbf{a}$ serves only as an initialization input to the network. Experimental results further show that the specific choice of $\mathbf{a}$ has negligible impact on both the convergence behavior and the resulting codebook quality. Therefore, an all-one vector is adopted for $\mathbf{a}$.


After passing $\mathbf{a}$ through $Z-1$ hidden layers, the output vector $\underline{\mathbf{c}}$ is obtained and interpreted as the entries in the SCMA MCBs. Since neurons can only represent real-valued quantities, each non-zero complex codebook entry is represented by two neurons. Hence, for an SCMA system with $L$ users, a total of $I_\mathrm{c}=2LMd_\mathrm{v}$ output neurons are required. For simplicity, we denote $c_{l,m,d,k}$, where $l=1,2,\ldots,L$, $m=1,2,\ldots,M$, $d=1,2,\ldots,d_\mathrm{v}$, and $k=1,2$, as a single output neuron in $\underline{\mathbf{c}}$ which represents the $k$-th real-valued component corresponding to the $d$-th entry of the $m$-th mother codeword for the $l$-th user. We further define the $m$-th mother codeword in $\mathcal{C}_l$ as
$\mathbf{c}_{l,m} \triangleq [ ( c_{l,m,1,1} + j c_{l,m,1,2} ), ( c_{l,m,2,1} + j c_{l,m,2,2} ), \ldots, ( c_{l,m,d_\mathrm{v},1} + j c_{l,m,d_\mathrm{v},2} ) ]^{\top}$.
If the power-balanced constraint is imposed, the expected energy of each codeword satisfies
$\mathop{\mathrm{E}}_m [ | \mathbf{c}_{l,m} |^2 ] = d_\mathrm{v}$ for all $l$.

During optimization, the loss gradient is backpropagated through the codebook-generation network, and all weight matrices and bias vectors are jointly updated using the Adam optimizer. The parameter set that yields the minimum loss is retained, and the procedure terminates when the prescribed convergence criterion is satisfied or the maximum number of iterations is reached. The adopted DNN architecture and optimization configuration are summarized in Table~I. With this general optimization procedure established, the following subsections develop the corresponding loss functions for different uplink channel environments, including AWGN, Rayleigh fading, and Rician fading channels.

\subsection{Rician Fading Channels}

The Rician fading channel is a widely used statistical model for multipath propagation, in which a dominant LOS component coexists with weaker scattered and reflected NLOS components. Accordingly, the channel coefficient can be modeled as the sum of a deterministic LOS term and a random NLOS term, given by
\begin{equation}
	\begin{split}
		\label{Rician-channel}
		h = \sqrt{\frac{\kappa}{1+\kappa}} h_{\mathrm{LOS}}+ \sqrt{\frac{1}{1+\kappa}} h_{\mathrm{NLOS}},
	\end{split}
\end{equation}
where $\kappa$ denotes the Rician $K$-factor that quantifies the power ratio between the dominant LOS component and the scattered NLOS components. The LOS component is given by
$h_{\mathrm{LOS}} = \exp( j 2\pi f_{\mathrm{d}} \cos \theta_0 + j \phi_0 )$, where $f_{\mathrm{d}}$ is the maximum Doppler shift, $\theta_0$ is the angle of arrival, and $\phi_0$ is an initial phase uniformly distributed over $[-\pi,\pi)$. In contrast, $h_{\mathrm{NLOS}}$ is modeled as a zero-mean, unit-variance complex Gaussian random variable. Equivalently, the channel coefficient $h$ can be modeled as an independent complex Gaussian random variable with distribution $\mathcal{CN}(\mu, 2\sigma^2)$, where $\kappa = \frac{\mu^2}{2\sigma^2}$.



In the absence of an LOS component, i.e., when $\kappa = 0$, the channel reduces to Rayleigh fading, where propagation is entirely governed by scattered and reflected paths. As $\kappa$ increases, the LOS component becomes increasingly dominant, and the channel behavior gradually approaches that of an AWGN channel, with reduced fading severity \cite{Proakis}.


\subsection{Derivation of Critical Distances}
For uplink Rician fading channels, if a single RE is considered and the signals from different UEs are viewed as originating from multiple transmit antennas, the SCMA scheme can be modeled as a multiple-input single-output (MISO) system with $d_\mathrm{f}$ transmit antennas and one receive antenna \cite{Proddis}. Moreover, by interpreting different REs as distinct time instants, the SCMA signal matrix design becomes analogous to the construction of a $d_\mathrm{f}\times K$ space-time code for fast fading channels. This interpretation is consistent with the principles established in prior works on space-time coding \cite{tri1}, \cite{STTC_Vucetic}.

We define $\mathbf{s}_{k} = [x_{l_{k,1},k}, x_{l_{k,2},k}, \ldots, x_{l_{k,d_{\mathrm{f}}},k}]^{\top}$ as the transmit signal vector associated with the $k$-th RE, where $l_{k,i}$, for $i = 1, 2, \ldots, d_{\mathrm{f}}$, denotes the indices of the UEs accessing the $k$-th RE, and $x_{l_{k,i},k}$ represents the symbol transmitted by the UE indexed by $l_{k,i}$ over this RE. Let $r_k$ denote the $k$-th entry of the received signal vector $\mathbf{r}$. Accordingly, the received signal in (\ref{UP_received}) can be rewritten as
\begin{equation}
	\label{up_Riciank}
	\begin{split}
	r_k = \sqrt{\frac{E_{\mathrm{s}}}{d_{\mathrm{f}}}} \sum\limits_{i=1}^{d_{\mathrm{f}}} h_{l_{k,i},k} x_{l_{k,i},k} + n_k=\sqrt{\frac{E_{\mathrm{s}}}{d_{\mathrm{f}}}} \mathbf{h}_k^{\top}\mathbf{s}_k + n_k,
	\end{split}
\end{equation}
where $h_{l_{k,i},k}$ represents the channel coefficient between the $l_{k,i}$-th UE, which accesses the $k$-th RE, and the base station. The vector $\mathbf{h}_k\in \mathbb{C}^{d_\mathrm{f}}$ is defined as  $\mathbf{h}_k\equiv[h_{l_{k,1},k},h_{l_{k,2},k},\ldots,h_{l_{k,d_\mathrm{f}},k}]^{\top}$. 
Since the indices $l_{k,1}, l_{k,2}, \ldots, l_{k,d_{\mathrm{f}}}$ vary across different REs, the corresponding channel coefficients $h_{l_{k,i},k}$ depend on both $l_{k,i}$ and $k$, thereby modeling the channel as a fast Rician fading MISO channel.
We further define $\mathbf{S}=[\mathbf{s}_1,\mathbf{s}_2,\ldots,\mathbf{s}_K]$ as the transmitted signal matrix over $K$ REs. Accordingly, the erroneously decoded signal matrix is denoted as $\mathbf{\hat{S}} = [\mathbf{\hat{s}}_1, \mathbf{\hat{s}}_2, \ldots, \mathbf{\hat{s}}_K]$.
Following the analytical framework presented in \cite{STTC_Vucetic}, the derivation of the pair-wise error probability (PEP) between two candidate transmit signal matrices, $\mathbf{S}$ and $\mathbf{\hat{S}}$, entails evaluating the probability that the maximum likelihood (ML) decision metric for $\mathbf{\hat{S}}$ exceeds that of $\mathbf{S}$.
This relationship can be formally expressed as
\begin{equation}
	\begin{split}
		\label{up_RicianQ}
		\begin{aligned}
			&P(\mathbf{S} \rightarrow \mathbf{\hat{S}}|\mathbf{h}_1,\mathbf{h}_2,\cdots,\mathbf{h}_K)\\
			& = Q \left(\sqrt{\frac{E_{\mathrm{s}}}{2d_{\mathrm{f}}N_{0}} \sum\limits_{k\in\rho(\mathbf{S},\mathbf{\hat{S}})}\parallel \mathbf{h}_k^{\top}(\mathbf{s}_{k}-{\mathbf{\hat{s}}_{k}}) \parallel^2} \right),
		\end{aligned}
	\end{split}
\end{equation}
where ${\rho(\mathbf{S},\mathbf{\hat{S}})}$ identifies the set of RE indices for which $\mathbf{s}_k \neq \mathbf{\hat{s}}_k$ and $Q(x) \equiv \frac{1}{\sqrt{2\pi}}\int_{x}^{\infty} \exp\left(\frac{-t^{2}}{2}\right)dt$.
By applying the Chernoff bound,
the PEP can be conservatively estimated as
\begin{equation}
	\begin{split}
		\label{up_RicianPeP}
		\begin{aligned}
			&P(\mathbf{S} \rightarrow \mathbf{\hat{S}}|\mathbf{h}_1,\mathbf{h}_2,\cdots,\mathbf{h}_K)\\
			& \leq\ \frac{1}{2} \exp\left(\frac{-E_{\mathrm{s}}}{4d_{\mathrm{f}}N_{0}}\sum\limits_{k\in\rho(\mathbf{S},\mathbf{\hat{S}})}\parallel \mathbf{h}_k^{\top}(\mathbf{s}_{k}-{\mathbf{\hat{s}}_{k}}) \parallel^{2}\right).
		\end{aligned}
	\end{split}
\end{equation}

This formulation provides a rigorous assessment of the transmission reliability under fading conditions. 
Specifically, the faded squared Euclidean distance is quantified as follows:

\begin{equation}
	\begin{split}
		\label{up_RicianEU}
		\begin{aligned}
			& \sum\limits_{k\in\rho(\mathbf{S},\mathbf{\hat{S}})} \parallel \mathbf{h}_k^{\top}(\mathbf{s}_{k}-{\mathbf{\hat{s}}_{k}}) \parallel^{2} \\
			& = \sum\limits_{k\in\rho(\mathbf{S},\mathbf{\hat{S}})} \left\vert \sum\limits^{d_\mathbf{f}}_{i=1} h_{l_{k,i},k}{\Delta{x}_{l_{k,i},k}} \right\vert^{2}  = \sum\limits_{k\in\rho(\mathbf{S},\mathbf{\hat{S}})} \mid \Delta_k \mid^{2} ,
		\end{aligned}
	\end{split}
\end{equation}
where $|\Delta_k|^2$ denotes the faded squared Euclidean distance on the $k$-th RE.
To evaluate the average effect of the channel on this distance, we average with respect to the channel coefficients and write
\begin{equation}
	\begin{split}
		\label{up_RicianE}
		\begin{aligned}
			& P(\mathbf{S} \rightarrow \mathbf{\hat{S}}) \leq \frac{1}{2} \displaystyle \mathop{\mathrm{E}} \left[\exp\left(\frac{-E_{\mathrm{s}}}{4d_{\mathrm{f}}N_{0}}\sum\limits_{k\in\rho(\mathbf{S},\mathbf{\hat{S}})} \mid \Delta_k \mid^{2}\right) \right] \\
			& = \frac{1}{2} \prod\limits_{k\in\rho(\mathbf{S},\mathbf{\hat{S}})} \displaystyle \mathop{\mathrm{E}} \left[\exp\left(\frac{-E_{\mathrm{s}}}{4d_{\mathrm{f}}N_{0}} \mid \Delta_k \mid^{2}\right) \right].
		\end{aligned}
	\end{split}
\end{equation}

The above derivation assumes statistically independent small-scale fading coefficients across distinct UE-RE pairs. While the coefficients associated with different UEs generally correspond to different wireless links, those associated with different REs of the same UE may exhibit correlation, depending on the physical realization and mapping of the REs. Nevertheless, the independence assumption can serve as a reasonable approximation when the SCMA REs are mapped or interleaved over physical resources with sufficiently low channel correlation. In an orthogonal frequency-division multiple access (OFDMA)-based realization, this can be achieved by mapping different SCMA REs onto sufficiently separated subcarriers.

Since $h_{l_{k,i},k}$ are modeled as independent non-central circularly symmetric complex Gaussian random variables, their linear combination $\Delta_k$ also follows a non-central circularly symmetric complex Gaussian distribution, i.e., $\Delta_k\sim\mathcal{CN}(\mu_k,2\sigma_k^2)$. Consequently, the envelope \(|\Delta_k|\) follows a Rician distribution with noncentrality parameter \(|\mu_k|\) and scale parameter \(\sigma_k\), whose probability density function (PDF) is given by \cite{Proakis}
\begin{equation}
	\begin{split}
		\label{up_RicianB}
		\begin{aligned}
			P_{\mid \Delta_k\mid} (r) = \frac{r}{\sigma^2_k} \exp \left(-\frac{r^2 + |\mu_k|^2}{2 \sigma^2_k}\right)I_0\left(\frac{r\mid\mu_k\mid}{\sigma^2_k} \right),
		\end{aligned}
	\end{split}
\end{equation}
where $I_0(\mathbf{.})$ denotes the zeroth-order modified Bessel function of the first kind. Following (\ref{up_RicianEU}), the parameters $|\mu_k|^2$ and $\sigma^2_k$ are given by
\begin{equation}
	\begin{split}
		\label{up_RicianMV}
		\begin{aligned}
			|\mu_k|^2 = \left\vert \mu \sum\limits^{d_\mathbf{f}}_{i=1}{\Delta{x}_{{l_{k,i}},k}}  \right\vert^2, \ \sigma^2_k = \sigma^2\sum\limits^{d_\mathbf{f}}_{i=1}{\vert\Delta{x}_{l_{k,i},k}\vert^2}.
		\end{aligned}
	\end{split}
\end{equation}
When the effective Rician factor associated with $\Delta_k$, given by $|\mu_k|^2/(2\sigma^2_k)$, is sufficiently large, the Rician envelope $|\Delta_k|$ approaches a Gaussian random variable with mean $|\mu_k|$ and variance $\sigma_k^2$. Therefore, in the strong-LOS regime, $|\mu_k|$ and $\sigma_k^2$ can be approximately interpreted as the mean and variance of $|\Delta_k|$, respectively.

To account for the randomness of $\Delta_k$ and to provide a robust performance measure that captures the variability of the channel, the PDF described using (\ref{up_RicianB}) and (\ref{up_RicianMV}) is integrated with (\ref{up_RicianE}) to derive the unconditional PEP, where
\begin{equation}
	\begin{split}
		\label{up_RicianUPEP_1}
		\begin{aligned}
			P(\mathbf{S} \rightarrow \mathbf{\hat{S}}) \leq \frac{1}{2} \prod\limits_{k\in\rho(\mathbf{S},\mathbf{\hat{S}})} \frac{1}{1+\frac{\sigma^2_k E_{\mathrm{s}}}{2 d_{\mathrm{f}} N_\mathrm{0}}} \exp\left (- \frac{|\mu_k|^2\frac{E_{\mathrm{s}}}{4d_{\mathrm{f}}N_\mathrm{0}}}{1+\frac{\sigma^2_k E_{\mathrm{s}}}{2d_{\mathrm{f}}N_\mathrm{0}}} \right).
		\end{aligned}
	\end{split}
\end{equation}
Under the high-SNR assumption, i.e., $\frac{\sigma_k^2E_\mathrm{s}}{(2d_\mathrm{f}N_0)}\gg1$ for $k\in\rho(\mathbf S,\hat{\mathbf S})$, the unity term in each denominator of (\ref{up_RicianUPEP_1}) can be neglected. Accordingly, the PEP can be approximated as
\begin{equation}
	\label{up_RicianUB_1}
	\begin{aligned}
		& P(\mathbf{S} \rightarrow \mathbf{\hat{S}}) \\
		& \leq \frac{1}{2} \left( {\frac{E_{\mathrm{s}}}{2d_{\mathrm{f}} N_\mathrm{0}}}\right)^{-\delta} \prod\limits_{k\in\rho(\mathbf{S},\mathbf{\hat{S}})} \left( (\sigma^2_k)^{-1} \exp\left (- \frac{|\mu_k|^2}{2\sigma^2_k} \right)\right)\\
		& =  \left( {\frac{E_{\mathrm{s}}}{2d_{\mathrm{f}} N_\mathrm{0}}}\right)^{-\delta} \frac{1}{d^2_{\mathrm{c}}(\mathbf{S},\mathbf{\hat{S}})},
	\end{aligned}
\end{equation}
where $\delta \equiv \left| \rho(\mathbf{S}, \hat{\mathbf{S}}) \right|$ denotes the number of RE indices (i.e., columns) at which the transmitted and erroneously decoded signal matrices differ, and is referred to as the {\it symbol distance} (SD). In addition, the term $d^2_{\mathrm{c}}$ is defined as the {\it critical distance} (CD) between the signal-matrix pair $(\mathbf{S}, \hat{\mathbf{S}})$. The CD, which captures the combined impact of channel variations and signal discrepancies, is formally defined as
\begin{equation}
	\begin{split}
		\label{CD_1}
		\begin{aligned}
			d^2_{\mathrm{c}}(\mathbf{S},\mathbf{\hat{S}}) = 2\prod\limits_{k\in\rho(\mathbf{S},\mathbf{\hat{S}})} \left(\sigma_k^2 \exp\left(\frac{|\mu_k|^2}{2\sigma_k^2}\right)\right).
		\end{aligned}
	\end{split}
\end{equation}

In (\ref{up_RicianUB_1}), the SD value \(\delta\) determines the diversity order, whereas \(d_\mathrm{c}^2(\mathbf S,\hat{\mathbf S})\) characterizes the corresponding effective-distance or coding-gain contribution. For signal-matrix pairs with the same SD, a larger \(d_\mathrm{c}^2(\mathbf S,\hat{\mathbf S})\) corresponds to a smaller asymptotic PEP and therefore a larger coding gain. Following the design criteria for space-time coding as outlined in \cite{tri1}, \cite{STTC_Vucetic}, a fundamental objective is to maximize the minimum SD (MSD) among all possible $(\mathbf{S},\mathbf{\hat{S}})$ pairs. This maximization is crucial for optimizing the error performance, since the value of $\delta$ directly determines the slope of the error rate curve.

In the context of an SCMA scheme, the MSD value is inherently fixed at $d_\mathrm{v}$ \cite{Proddis}, due to the underlying system architecture in which the signals of each UE are mapped onto a predetermined set of $d_\mathrm{v}$ REs. 
Given this setting, a key design objective is to optimize the design metric derived from the CD values of signal-matrix pairs, under the given MSD. For the calculation of CD values, we consider the following two cases:


{\it Case 1: The special case where $\delta=d_\mathrm{v}$}

By properly designing the factor graph matrix such that each column has weight $d_\mathrm{v}$ and each row has weight $d_\mathrm{f}$, this condition holds if and only if one UE transmits different codewords in $\mathbf{S}$ and $\mathbf{\hat{S}}$, while all other UEs transmit the same codewords in both matrices. In this case, $\rho(\mathbf{S},\mathbf{\hat{S}})$ contains exactly $d_\mathrm{v}$ RE indices, and for each $k \in \rho(\mathbf{S},\mathbf{\hat{S}})$, the vectors $\mathbf{s}_{k}$ and $\hat{\mathbf{s}}_{k}$ differ in only one element. Suppose that, among the MCBs $\mathcal{C}_{1}, \mathcal{C}_{2}, \ldots, \mathcal{C}_{L}$ generated by the DNN, the $l$-th UE transmits the $p$-th and $q$-th mother codewords in $\mathbf{S}$ and $\mathbf{\hat{S}}$, respectively. Based on (\ref{up_RicianMV}) and (\ref{CD_1}), the CD corresponding to the signal-matrix pair with $\delta = d_\mathrm{v}$ is given by
\begin{equation}\label{CD_2}
	\begin{aligned}
	d^2_{\mathrm{c}}(\mathbf{S},\mathbf{\hat{S}}) &= 2\sigma^{2d_\mathrm{v}} \exp\left(\kappa d_\mathrm{v}\right) \prod\limits_{k\in\rho(\mathbf{S},\mathbf{\hat{S}})} \mid {c}_{l,p,k} - {c}_{l,q,k}\mid^{2}\\
	&\equiv[\gamma^{(d_\mathrm{v})}_{\mathrm{c},n}]^{-1},
	\end{aligned}
\end{equation}
where $n$ is indexed by $l$, $p$, and $q$. For each UE $l$ with codebook size $M$, there are $U=\tbinom{M}{2}$ corresponding values of $\gamma^{(d_\mathrm{v})}_{\mathrm{c},n}$. Therefore, for the case $\delta = d_\mathrm{v}$, the total number of signal-matrix pairs is $LU$, with the same number of corresponding $\gamma^{(d_\mathrm{v})}_{\mathrm{c},n}$ values. Thus, $n$ ranges from $1$ to $LU$.


\begin{figure*}
	\begin{equation}\label{CD_3}
		\begin{aligned}
			d^2_{\mathrm{c}}(\mathbf{S},\mathbf{\hat{S}})&=2\sigma^{2\delta}\exp\left(\kappa\sum\limits_{k\in\rho(\mathbf{S},\mathbf{\hat{S}})} \frac{\mid \sum_{t\in \varrho(\mathbf{s}_{k},\hat{\mathbf{s}}_{k})}(s_{t,k}-\hat{s}_{t,k})\mid^2}{\sum_{t\in \varrho(\mathbf{s}_{k},\hat{\mathbf{s}}_{k})}\left|s_{t,k}-\hat{s}_{t,k}\right|^{2}}\right)\prod_{k\in\rho(\mathbf{S},\mathbf{\hat{S}})}\left[\sum_{t\in \varrho(\mathbf{s}_{k},\hat{\mathbf{s}}_{k})}\left|s_{t,k}-\hat{s}_{t,k}\right|^{2}\right]\equiv [\gamma^{(\delta)}_{\mathrm{c},n}]^{-1} \\
		\end{aligned}
	\end{equation}
	\hrulefill
\end{figure*}

{\it Case 2: The cases where $\delta\neq d_\mathrm{v}$}

For the convenience of presentation, we re-define $s_{t,k}$ as the $t$-th entry of $\mathbf{s}_{k}$, and let $\varrho(\mathbf{s}_{k},\hat{\mathbf{s}}_{k})$ represent the set of entries (UEs) where $s_{t,k}\neq \hat{s}_{t,k}$. Accordingly, substituting (\ref{up_RicianMV}) into (\ref{CD_1}), using $\kappa=\frac{\mu^2}{2\sigma^2}$, and collecting the corresponding product and exponential terms yields the CD expression in (\ref{CD_3}). The parameter $n$, which indexes the CD value, depends on the selection pattern of $\delta$ REs, the indices of the UEs exclusively associated with these REs, and the corresponding transmitted codeword indices.

The exponent term in (\ref{CD_3}), excluding the factor \(\kappa\), is referred to as the {\it weighted squared Euclidean distance} (WSED) between \((\mathbf S,\hat{\mathbf S})\). For each \(k\in\rho(\mathbf S,\hat{\mathbf S})\), let $N_k \triangleq \mid\varrho(\mathbf{s}_{k},\hat{\mathbf{s}}_{k})\mid$ and define the vector of nonzero symbol differences as $\mathbf{\epsilon}_k\equiv[s_{t,k}-\hat{s}_{t,k}]_{t\in \varrho(\mathbf{s}_{k},\hat{\mathbf{s}}_{k})}\in\mathbb{C}^{N_k}$. By applying the Cauchy-Schwarz inequality, we have
\begin{equation}
	\mid\mathbf{1}_{N_k}^{H}\mathbf{\epsilon}_k\mid^2\leq \parallel\mathbf{1}_{N_k}\parallel^2 \parallel\mathbf{\epsilon}_k\parallel^2,
\end{equation}
which gives
\begin{equation}
	\frac{
		\left|
		\sum_{t\in\varrho(\mathbf s_k,\hat{\mathbf s}_k)}
		\left(s_{t,k}-\hat{s}_{t,k}\right)
		\right|^2
	}{
		\sum_{t\in\varrho(\mathbf s_k,\hat{\mathbf s}_k)}
		\left|s_{t,k}-\hat{s}_{t,k}\right|^2
	}
	\leq
	|\varrho(\mathbf s_k,\hat{\mathbf s}_k)|
	\label{resp_cauchy_expanded}
\end{equation}
for each \(k\in\rho(\mathbf S,\hat{\mathbf S})\). Therefore, the WSED is upper-bounded by $\sum_{k\in\rho(\mathbf{S},\hat{\mathbf{S}})}|\varrho(\mathbf s_k,\hat{\mathbf s}_k)|$. For a given RE, equality holds when all nonzero symbol differences \(s_{t,k}-\hat{s}_{t,k}\), \(t\in\varrho(\mathbf{s}_{k},\hat{\mathbf{s}}_{k})\), are identical in both magnitude and phase. Consequently, the exponent in (\ref{CD_3}) is upper-bounded by $\kappa\sum_{k\in\rho(\mathbf{S},\hat{\mathbf{S}})}|\varrho(\mathbf s_k,\hat{\mathbf s}_k)|$. In the special case of \(\delta=d_\mathrm{v}\), only one UE differs on each involved RE, such that \(|\varrho(\mathbf{s}_{k},\hat{\mathbf{s}}_{k})|=1\). Hence, the above bound is attained trivially and the exponent reduces to \(\kappa d_\mathrm{v}\), consistent with (\ref{CD_2}).



\subsection{AMI-Based Interpretation of the PEP-Based Design Criterion}
To provide an information-theoretic interpretation of the PEP-based criterion in Section III-C, we further examine the AMI of the uplink Rician SCMA system following analytical approaches similar to those in \cite{AMI_C} and \cite{AMI_Zilong}. Let \(\mathcal S=\{\mathbf S_1,\ldots,\mathbf S_{M^L}\}\) denote the set of possible transmitted signal matrices, where each \(\mathbf S\in\mathcal S\) corresponds to a joint selection of one codeword from each of the \(L\) UEs. Under equiprobable signaling,
\begin{equation}
	\label{input_entropy}
	p(\mathbf{S})=\frac{1}{M^L},
	\qquad
	\mathcal{H}(\mathbf{S})=L\log_2M.
\end{equation}

For a given channel realization $\mathbf{H}\equiv[\mathbf{h}_1,\mathbf{h}_2,\cdots,\mathbf{h}_K]^{\top}$, let $I_{\mathbf H}(\mathbf S;\mathbf r)$ denote the instantaneous MI between $\mathbf{S}$ and the received signal vector $\mathbf{r}\equiv[r_1,r_2,\cdots,r_K]^{\top}$, which is given by \cite{AMI_C}\cite{AMI_Zilong}
\begin{equation}
	\label{MI_definition}
	\begin{aligned} &
		I_{\mathbf H}(\mathbf S;\mathbf r)\\
		&=
		L\log_2M
		-\frac{1}{M^L}
		\sum_{\mathbf{S}\in\mathcal{S}}
		\mathrm{E}_{\mathbf{r}\mid\mathbf{S},\mathbf{H}}
		\left[
		\log_2
		\left(
		\sum_{\hat{\mathbf{S}}\in\mathcal{S}}
		\frac{
			p(\mathbf{r}|\hat{\mathbf{S}},\mathbf{H})
		}{
			p(\mathbf{r}|\mathbf{S},\mathbf{H})
		}
		\right)
		\right].
	\end{aligned}
\end{equation}
The corresponding AMI is obtained by averaging the instantaneous MI over the channel realizations as
\begin{equation}
	\label{AMI_exact}
	I_{\mathrm{AMI}}
	\triangleq
	\mathrm{E}_{\mathbf{H}}
	\left[
	I_{\mathbf H}(\mathbf S;\mathbf r)
	\right].
\end{equation}
For notational convenience, define $\lambda\triangleq\frac{E_{\mathrm{s}}}{d_{\mathrm f}N_0}$. An upper bound on the AMI is given by the following lemma.
\begin{lemma}\label{AMI_UB}
	For the uplink Rician SCMA system considered herein, the AMI is upper-bounded by $I_{\mathrm{AMI}}\leq I_{\mathrm{UP}}$, where
	\begin{equation}\label{AMI_UP_lamma1}
		\begin{aligned}
		   &I_{\mathrm{UP}}
			=L\log_2M\\
			&-\frac{1}{M^L} \sum_{\mathbf S\in\mathcal S} \log_2 \Bigg[ \sum_{\hat{\mathbf S}\in\mathcal S} \exp \Bigg( -\lambda \sum_{k\in\rho(\mathbf S,\hat{\mathbf S})} \left( |\mu_k|^2+2\sigma_k^2 \right) \Bigg) \Bigg]. 
		\end{aligned} 
	\end{equation}
\end{lemma}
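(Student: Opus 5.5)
The plan is to lower-bound the expectation term subtracted in (\ref{MI_definition}), averaged over both $\mathbf r$ and $\mathbf H$. The tool is Jensen's inequality applied to the log-sum-exp function, which is convex. Since the AMI equals $L\log_2M$ minus this averaged term, any lower bound on the term gives an upper bound on $I_{\mathrm{AMI}}$.

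First I would rewrite the likelihood ratio explicitly. I normalize each received entry by $\sqrt{N_0}$, so that (\ref{up_Riciank}) becomes $\tilde r_k=\sqrt{\lambda}\,\mathbf h_k^{\top}\mathbf s_k+\tilde n_k$ with $\tilde n_k\sim\mathcal{CN}(0,1)$. Conditioned on the transmitted $\mathbf S$, the received vector is $\tilde{\mathbf r}$ with $\tilde r_k-\sqrt{\lambda}\mathbf h_k^{\top}\mathbf s_k=\tilde n_k$. This gives
\begin{equation*}
\frac{p(\mathbf r|\hat{\mathbf S},\mathbf H)}{p(\mathbf r|\mathbf S,\mathbf H)}=\exp\Big(-\sum_{k=1}^{K}\big(|\sqrt{\lambda}\,\Delta_k+\tilde n_k|^2-|\tilde n_k|^2\big)\Big),
\end{equation*}
where $\Delta_k=\mathbf h_k^{\top}(\mathbf s_k-\hat{\mathbf s}_k)$ as in (\ref{up_RicianEU}). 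For $k\notin\rho(\mathbf S,\hat{\mathbf S})$ we have $\Delta_k=0$, so the sum reduces to $k\in\rho(\mathbf S,\hat{\mathbf S})$. The inner quantity in (\ref{MI_definition}) is therefore $\log_2\sum_{\hat{\mathbf S}}\exp(Z_{\hat{\mathbf S}})$, where each $Z_{\hat{\mathbf S}}$ is a real random variable depending on $(\mathbf H,\tilde{\mathbf n})$.

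Next I combine the two expectations: $\mathrm E_{\mathbf H}$ from (\ref{AMI_exact}) and $\mathrm E_{\mathbf r|\mathbf S,\mathbf H}$, which is an expectation over $\tilde{\mathbf n}$. Together they form a single expectation over the independent pair $(\mathbf H,\tilde{\mathbf n})$. The map $\mathbf z\mapsto\log_2\sum_{\hat{\mathbf S}}e^{z_{\hat{\mathbf S}}}$ is convex, so Jensen's inequality gives
\begin{equation*}
\mathrm E\Big[\log_2\sum_{\hat{\mathbf S}}e^{Z_{\hat{\mathbf S}}}\Big]\ge\log_2\sum_{\hat{\mathbf S}}e^{\mathrm E[Z_{\hat{\mathbf S}}]}.
\end{equation*}
Substituting this into (\ref{MI_definition})–(\ref{AMI_exact}) flips the sign and yields $I_{\mathrm{AMI}}\le I_{\mathrm{UP}}$, provided $\mathrm E[Z_{\hat{\mathbf S}}]$ takes the claimed form.

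The remaining step is to compute the mean. Expanding, $|\sqrt{\lambda}\Delta_k+\tilde n_k|^2-|\tilde n_k|^2=\lambda|\Delta_k|^2+2\sqrt{\lambda}\,\mathrm{Re}\{\Delta_k\tilde n_k^*\}$. The cross term has zero mean because $\tilde n_k$ is zero-mean and independent of $\mathbf H$. By the distribution $\Delta_k\sim\mathcal{CN}(\mu_k,2\sigma_k^2)$ with parameters (\ref{up_RicianMV}), we get $\mathrm E|\Delta_k|^2=|\mu_k|^2+2\sigma_k^2$. Hence $\mathrm E[Z_{\hat{\mathbf S}}]=-\lambda\sum_{k\in\rho(\mathbf S,\hat{\mathbf S})}(|\mu_k|^2+2\sigma_k^2)$, which is exactly the exponent in (\ref{AMI_UP_lamma1}). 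For $\hat{\mathbf S}=\mathbf S$ the exponent is $0$, consistent with the formula.

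The main obstacle I expect is bookkeeping rather than any conceptual difficulty. The noise normalization must be chosen so that the SNR factor comes out as exactly $\lambda=E_{\mathrm s}/(d_{\mathrm f}N_0)$ and not $\lambda/2$. This depends on the convention that $\mathcal{CN}(0,N_0)$ has total variance $N_0$, which makes the Gaussian likelihood $\propto\exp(-|\cdot|^2/N_0)$. I would also check that $\mu$ and $\sigma$ in (\ref{up_RicianMV}) are taken from the same $\mathcal{CN}(\mu,2\sigma^2)$ channel convention, so that the mean of $|\Delta_k|^2$ is indeed $|\mu_k|^2+2\sigma_k^2$.
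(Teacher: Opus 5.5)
Your proof is correct and follows the paper's argument. You write the likelihood ratio as $\exp(-d(\mathbf S,\hat{\mathbf S}))$, apply Jensen's inequality to the convex log-sum-exp function, and obtain the mean exponent $-\lambda\sum_{k\in\rho(\mathbf S,\hat{\mathbf S})}(|\mu_k|^2+2\sigma_k^2)$ from the vanishing noise cross term and $\mathrm{E}|\Delta_k|^2=|\mu_k|^2+2\sigma_k^2$. The only difference is that you apply Jensen once to the joint law of $(\mathbf H,\tilde{\mathbf n})$, whereas the paper applies it first over $\mathbf n$ for fixed $\mathbf H$ and then over $\mathbf H$; both give the identical bound.
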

\begin{IEEEproof}
	See Appendix A-1.
\end{IEEEproof}
The corresponding lower bound is given by the following lemma.
\begin{lemma}\label{AMI_LB}
	For the uplink Rician SCMA system considered herein, the AMI is lower-bounded by $I_{\mathrm{AMI}}\geq I_{\mathrm{LB}}$, where
	\begin{equation}\label{AMI_LB_lamma2}
		\begin{aligned}
			I_{\mathrm{LB}}
			&=2L\log_2M
			-K(\log_2e-1)\\
			&-\log_2\!\left[
			\sum_{\mathbf S\in\mathcal S}
			\sum_{\hat{\mathbf S}\in\mathcal S}
			\prod_{k\in\rho(\mathbf S,\hat{\mathbf S})}\frac{\exp\!\left[
				-\frac{\lambda|\mu_k|^2}
				{2(1+\lambda\sigma_k^2)}
				\right]}{1+\lambda\sigma_k^2}
			\right]. 
		\end{aligned} 
	\end{equation}
\end{lemma}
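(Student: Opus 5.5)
The plan is to follow the Jensen-type argument of \cite{AMI_C}, \cite{AMI_Zilong}. The idea is to pull the differential entropy of the noise out of (\ref{MI_definition}) explicitly, and only then move the expectation inside the logarithm. Moving it inside directly fails: $\mathrm{E}_{\mathbf r|\mathbf S,\mathbf H}[p(\mathbf r|\hat{\mathbf S},\mathbf H)/p(\mathbf r|\mathbf S,\mathbf H)]=1$ for every pair, so that bound is trivial.

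\textbf{Step 1 (rewrite the MI).} Write $\mathbf r=\sqrt{E_{\mathrm s}/d_{\mathrm f}}\,\mathbf H\text{-faded}(\mathbf S)+\mathbf n$, and let $\mathbf d(\mathbf S,\hat{\mathbf S})$ be the vector with entries $\sqrt{E_{\mathrm s}/d_{\mathrm f}}\,\Delta_k$, which vanish for $k\notin\rho(\mathbf S,\hat{\mathbf S})$. Since $p(\mathbf r|\mathbf S,\mathbf H)=(\pi N_0)^{-K}\exp(-\|\mathbf n\|^2/N_0)$ and $\mathrm{E}\|\mathbf n\|^2/N_0=K$, I will rewrite (\ref{MI_definition}) as
\[
I_{\mathbf H}=L\log_2M-K\log_2e-\frac{1}{M^L}\sum_{\mathbf S}\mathrm{E}_{\mathbf n}\Big[\log_2\sum_{\hat{\mathbf S}}\exp\!\Big(-\tfrac{\|\mathbf d(\mathbf S,\hat{\mathbf S})+\mathbf n\|^2}{N_0}\Big)\Big].
\]

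\textbf{Step 2 (Jensen twice).} Apply Jensen's inequality to the concave $\log_2$, once for $\mathrm{E}_{\mathbf n}$ and once for the uniform average over $\mathbf S$. This gives
\[
\log_2\Big[\frac{1}{M^L}\sum_{\mathbf S}\sum_{\hat{\mathbf S}}\mathrm{E}_{\mathbf n}\,e^{-\|\mathbf d+\mathbf n\|^2/N_0}\Big].
\]
The Gaussian integral factorizes over REs:
\[
\mathrm{E}_{n_k}\big[e^{-|d_k+n_k|^2/N_0}\big]=\tfrac12\,e^{-|d_k|^2/(2N_0)}.
\]
The product of the factors $\tfrac12$ contributes $+K$. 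Together with $-K\log_2e$ this yields $-K(\log_2e-1)$, and $-\log_2(1/M^L)$ together with $L\log_2M$ yields $2L\log_2M$. On the REs outside $\rho(\mathbf S,\hat{\mathbf S})$ the remaining exponential equals $1$. Hence
\[
I_{\mathbf H}\ge 2L\log_2M-K(\log_2e-1)-\log_2\sum_{\mathbf S,\hat{\mathbf S}}\prod_{k\in\rho}\exp\!\big(-\tfrac{\lambda}{2}|\Delta_k|^2\big).
\]

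\textbf{Step 3 (channel averaging).} Take $\mathrm{E}_{\mathbf H}$ of both sides. Since $-\log_2$ is convex, Jensen gives $\mathrm{E}_{\mathbf H}[-\log_2 X]\ge-\log_2\mathrm{E}_{\mathbf H}[X]$, so the expectation can be moved inside the sum. Under the independence assumption used for (\ref{up_RicianE}), the expectation factorizes over $k\in\rho(\mathbf S,\hat{\mathbf S})$. Each $\Delta_k\sim\mathcal{CN}(\mu_k,2\sigma_k^2)$ with parameters given by (\ref{up_RicianMV}), so the noncentral chi-square MGF gives
\[
\mathrm{E}\big[e^{-s|\Delta_k|^2}\big]=\frac{1}{1+2s\sigma_k^2}\exp\!\Big(-\frac{s|\mu_k|^2}{1+2s\sigma_k^2}\Big).
\]
This is the same computation that produced (\ref{up_RicianUPEP_1}). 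Setting $s=\lambda/2$ produces exactly the summand in (\ref{AMI_LB_lamma2}).

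The main obstacle is Step 1–2: keeping track of the constants. One must verify the noise-entropy subtraction $K\log_2e$ and the Gaussian-convolution factor $2^{-K}$ with the $\mathcal{CN}(0,N_0)$ normalization, so that the offset comes out precisely as $-K(\log_2e-1)$ and the exponent as $\lambda/2$ rather than $\lambda/4$. I would check this first on a single RE with a scalar constellation before writing the general case.
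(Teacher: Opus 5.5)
Your proposal is correct and follows essentially the same route as the paper's proof. Both arguments factor out the noise-energy term $K\log_2 e$, apply Jensen to the concave logarithm over $\mathbf n$, evaluate the Gaussian integral to obtain $2^{-K}e^{-\frac{\lambda}{2}\sum_k|\Delta_k|^2}$, and then apply Jensen over $\mathbf H$ (with the independent per-RE Rician MGF giving $\psi_k$) and over the uniform $\mathbf S$. The only difference is that you average over $\mathbf S$ before $\mathbf H$, which gives an intermediate bound on $I_{\mathbf H}$ itself, whereas the paper averages over $\mathbf H$ first; since these are successive Jensen steps on linear averages, the final bound is identical.
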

\begin{IEEEproof}
	See Appendix A-2.
\end{IEEEproof}
To relate the AMI lower bound to the PEP-based criterion derived in Section III-C, consider a distinct signal-matrix pair $\hat{\mathbf S}\neq\mathbf S$. As shown in Appendix A-3, under the high-SNR condition $\lambda\sigma_k^2\gg1$ for all $k\in\rho(\mathbf S,\hat{\mathbf S})$, the corresponding pairwise term can be approximated as
\begin{equation}\label{AMI_high_SNR_lamma}
	\prod_{k\in\rho(\mathbf S,\hat{\mathbf S})}\frac{\exp\!\left[
		-\frac{\lambda|\mu_k|^2}
		{2(1+\lambda\sigma_k^2)}
		\right]}{1+\lambda\sigma_k^2}\simeq	\frac{2 \lambda^{-\delta}}
		{d_c^2(\mathbf S,\hat{\mathbf S})}.
\end{equation}
This result reveals that the same diversity order $\delta$ and critical distance $d_c^2(\mathbf S,\hat{\mathbf S})$ governing the high-SNR PEP in \eqref{up_RicianUB_1} also characterize the codebook-dependent pairwise contributions to the AMI lower bound. Specifically, the minimum diversity order determines the dominant asymptotic behavior, while, for a given diversity order, signal-matrix pairs with smaller critical distances contribute more significantly. Hence, the PEP-based design criterion is asymptotically consistent with that implied by the AMI lower bound.

Accordingly, if all distinct signal-matrix pairs could be incorporated into the PEP-based design objective, suppressing their dominant pairwise-error contributions would also improve the codebook-dependent terms governing the AMI lower bound in the high-SNR regime, as established in Appendix A-3. This demonstrates the consistency between the union-bound and AMI-lower-bound perspectives when the complete set of pairwise error events is taken into account. In practice, however, exhaustive enumeration of all signal-matrix pairs is computationally prohibitive. Therefore, in the following subsection, we develop a practical loss-function formulation that explicitly incorporates the most significant signal-matrix pairs while accounting for the effects of higher-order error events through an additional distance-based term.

\subsection{Proposed Loss Function for Rician Fading Channels}

In the previous work \cite{Proddis}, which focuses on Rayleigh fading channels, the union-bound analysis indicates that signal-matrix pairs with $\delta > d_\mathrm{v}+1$ generally lead to negligible PEP values in the high-SNR regime. As a result, their impact on the overall union bound of the error probability becomes insignificant. However, this property does not necessarily hold under Rician fading conditions. For clarity of exposition, we begin by considering the case in which the DNN loss function is constructed solely from the PEP terms associated with pairs satisfying $\delta \leq d_\mathrm{v}+1$.

\subsubsection{Loss Function Formulation Incorporating Terms with $\delta \leq d_\mathrm{v}+1$}

The loss function can be formulated as
\begin{equation}\label{Rician_loss2}
	\begin{split}
		\Delta_{\mathrm{c}}= \Gamma_{\mathrm{c}}^{(d_\mathrm{v})}+\alpha\cdot\Gamma_{\mathrm{c}}^{(d_\mathrm{v}+1)} ,
	\end{split}
\end{equation}
where $\Gamma_{\mathrm{c}}^{(\delta)}=\sum_n \gamma^{(\delta)}_{\mathrm{c},n}$, and $0 \leq \alpha \leq 1$ denotes a hyperparameter that adjusts the relative contribution of the cases corresponding to $\delta = d_\mathrm{v}$ and $\delta = d_\mathrm{v}+1$. From an error-performance perspective, $\Gamma_{\mathrm{c}}^{(d_\mathrm{v})}$ accounts for the minimum-diversity error events and therefore captures the dominant high-SNR error behavior, whereas $\alpha\Gamma_{\mathrm{c}}^{(d_\mathrm{v}+1)}$ further accounts for the next-order multiuser error events. Minimizing these terms reduces the corresponding dominant PEP contributions and hence improves the resulting BER performance.

However, as shown in (\ref{CD_2}) and (\ref{CD_3}), the values of $\gamma^{(\delta)}_{\mathrm{c},n}$ are governed by exponential decay factors whose rates are proportional to $\kappa$. Consequently, when evaluating the loss function based on (\ref{Rician_loss2}), terms with extremely small magnitudes may lead to numerical instability under floating-point arithmetic, particularly for large values of $\kappa$. 
Based on these considerations, a log-domain formulation is proposed in the following to compute the loss function more reliably.

We first rewrite $\gamma^{(\delta)}_{\mathrm{c},n}$ as $\gamma^{(\delta)}_{\mathrm{c},n}=A_{n}^{(\delta)}\exp (-\kappa B_{n}^{(\delta)})$, where $A_{n}^{(\delta)}\equiv \left\{2\sigma^{2\delta}\prod_{k\in\rho(\mathbf{S},\mathbf{\hat{S}})}\left[\sum_{t\in \varrho(\mathbf{s}_{k},\hat{\mathbf{s}}_{k})}\left|s_{t,k}-\hat{s}_{t,k}\right|^{2}\right]\right\}^{-1}$ and $B_{n}^{(\delta)}\equiv \sum_{k\in\rho(\mathbf{S},\mathbf{\hat{S}})} \frac{\mid \sum_{t\in \varrho(\mathbf{s}_{k},\hat{\mathbf{s}}_{k})}(s_{t,k}-\hat{s}_{t,k})\mid^2}{\sum_{t\in \varrho(\mathbf{s}_{k},\hat{\mathbf{s}}_{k})}\left|s_{t,k}-\hat{s}_{t,k}\right|^{2}}$. Accordingly, we have
\begin{equation}\label{log_loss_1}
	\begin{split}
		\begin{aligned}
		\gamma^{(\delta)}_{\mathrm{c},n}&=A_{n}^{(\delta)}\exp(-\kappa B_{n}^{(\delta)})\\
		&=\exp\left[\kappa\left(\frac{1}{\kappa}\ln A_{n}^{(\delta)}\right)\right]\exp(-\kappa B_{n}^{(\delta)})\\
		&=\exp\left[-\kappa\left(B_{n}^{(\delta)}-\frac{1}{\kappa}\ln A_{n}^{(\delta)}\right)\right]\\
		&\equiv \exp(\tilde{\gamma}^{(\delta)}_{\mathrm{c},n}),
		\end{aligned}
	\end{split}
\end{equation}
where $\tilde{\gamma}^{(\delta)}_{\mathrm{c},n}$ denotes the log-domain metric of $\gamma^{(\delta)}_{\mathrm{c},n}$. Consequently, rather than calculating $\Gamma_{\mathrm{c}}^{(\delta)}$, we propose to calculate its corresponding log-domain metric
\begin{equation}\label{log_loss_2}
	\begin{split}
		\begin{aligned}
			 \tilde{\Gamma}_{\mathrm{c}}^{(\delta)}&\equiv\ln \Gamma_{\mathrm{c}}^{(\delta)}=\ln\sum_n \gamma^{(\delta)}_{\mathrm{c},n}\\
			 &=\ln\sum_n \exp(\tilde{\gamma}^{(\delta)}_{\mathrm{c},n}).
		\end{aligned}
	\end{split}
\end{equation}
Since \(\ln(\cdot)\) is monotonic, the log-domain transformation preserves the underlying PEP-based optimization objective while enabling a numerically robust evaluation when the exponential terms become extremely small. To facilitate the subsequent derivation, we first recall the max-star operator, also known as the Jacobian logarithm \cite{Robertson1995}. For any $a,b\in \mathbb{R}$, by defining $\mathrm{max}^*(a,b) \triangleq \ln(e^a +e^b)$, we have
\begin{equation}\label{log_loss_2}
	\mathrm{max}^*(a,b) = \mathrm{max}(a,b) + \ln({1+ e^{-\mid a-b \mid})}.
\end{equation}

Consequently, $\tilde{\Gamma}_{\mathrm{c}}^{(\delta)}$ can be computed recursively by applying the $\mathrm{max}^*$ operation to the previously obtained result and the next $\tilde{\gamma}^{(\delta)}_{\mathrm{c},n}$ value. Following the same line of reasoning, the detailed procedure for generating the log-domain loss function, which is defined as $\tilde{\Delta}_{\mathrm{c}}$, is summarized in Algorithm~1.


Consequently, the designed MCBs for uplink Rician fading channels, which can be obtained at the output of the DNN, are obtained by minimizing the loss function across multiple iterations or updates, as denoted by
\begin{equation}\label{ULRician_criterion}
	\begin{split}
		\{\mathcal{C}^{*}_{1},\mathcal{C}^{*}_{2},\cdots,\mathcal{C}^{*}_{L}\}=\arg\min_{\mathbf{c}^{(t)}, t = 1,2, \cdots , T}\tilde{\Delta}_{\mathrm{c}}(\mathbf{c}^{(t)}),
	\end{split}
\end{equation}
where $t$ denotes the iteration index for updating the generated signal points $\mathbf{c}^{(t)}$, and $T$ represents the maximum number of iterations. 
This formulation steers the gradient-based optimization toward smaller loss values and, equivalently, toward codebooks with more favorable critical-distance properties. As a result, the codebooks are progressively refined, leading to improved alignment with the underlying transmission environment and, consequently, enhanced system robustness and error performance.


\subsubsection{Extended Loss Function Formulation Incorporating Terms with $\delta > d_\mathrm{v}+1$}

As the Rician $K$-factor $\kappa$ increases, which in practice corresponds to a decrease in $\sigma^2$, the factor $2\sigma^{2\delta}$ in (\ref{CD_3}) decays rapidly. On the other hand, under the high-SNR assumption, the term $\left(\frac{E_{\mathrm{s}}}{2d_{\mathrm{f}}N_\mathrm{0}}\right)^{-\delta}$ in (\ref{up_RicianUB_1}) is also significantly attenuated. Hence, the effects of these two terms tend to offset each other. As a result, the relative influence of SD becomes less pronounced in the large-$\kappa$ regime, whereas the WSED-related exponential term in (\ref{CD_3}) becomes increasingly dominant. Consequently, error events with \(\delta>d_{\mathrm v}+1\), which are generally negligible under Rayleigh fading in the high-SNR regime, may make non-negligible contributions to the union bound and should therefore be accounted for in the codebook design.


\begin{algorithm}[t]
	\caption{Calculation of the Loss Function in Log-Domain}\label{algorithm_loss}
	\begin{algorithmic}[1]
		\STATE Initialize $\tilde{\Delta}_{\mathrm{c}}=0$
		\FOR{$\delta = d_\mathrm{v} :d_\mathrm{v}+1$}
		\IF{$\delta = d_\mathrm{v}$}  
		\STATE Calculate a total of $LU$ $\tilde{\gamma}^{(d_\mathrm{v})}_{\mathrm{c},n}$ values indexed by $n$ based on (\ref{CD_2}) and (\ref{log_loss_1}).
		\STATE Let $\tilde{\Delta}_{\mathrm{c}}=\tilde{\gamma}^{(d_\mathrm{v})}_{\mathrm{c},1}$.
		\FOR  {$n = 2 :LU$}
		\STATE Update $\tilde{\Delta}_{\mathrm{c}}$ by $\tilde{\Delta}_{\mathrm{c}}= \mathrm{max}^*(\tilde{\Delta}_{\mathrm{c}},\tilde{\gamma}^{(d_\mathrm{v})}_{\mathrm{c},n})$.
		\ENDFOR
		
		\ELSIF{$\delta = d_\mathrm{v}+1$}
		\FOR {$e=1 : \tbinom{K}{d_\mathrm{v}+1}$}
		\STATE Based on $\mathbf{F}$, find the $L_e$ UEs that only utilize the $e$-th combination of $d_\mathrm{v}+1$ REs.
		\STATE Generate a total of $\tbinom{M^{L_e}}{2}$ $\tilde{\gamma}^{(d_\mathrm{v}+1)}_{\mathrm{c},n}$ values indexed by $n$ based on (\ref{CD_3}) and (\ref{log_loss_1}).
		\STATE Similar to (\ref{log_loss_1}), we have 
		\begin{equation}\label{Updated_Loss_Proposition1}
			\begin{split}
				\alpha\gamma^{(d_\mathrm{v}+1)}_{\mathrm{c},n}&=\exp(\ln\alpha+\tilde{\gamma}^{(d_\mathrm{v}+1)}_{\mathrm{c},n})\\
			\end{split}
		\end{equation}
		\FOR  {$n = 1 :\tbinom{M^{L_e}}{2}$}
		\STATE Update $\tilde{\Delta}_{\mathrm{c}}$ by $\tilde{\Delta}_{\mathrm{c}}= \mathrm{max}^*(\tilde{\Delta}_{\mathrm{c}},\ln\alpha+\tilde{\gamma}^{(d_\mathrm{v}+1)}_{\mathrm{c},n})$
		\ENDFOR
		
			\ENDFOR
			\ENDIF
			%
			%
			\ENDFOR
			\STATE \bf{Output:} $\tilde{\Delta}_{\mathrm{c}}$
		\end{algorithmic}
	\end{algorithm}


However, directly incorporating all such higher-order error events into the loss function is computationally prohibitive because of the extremely large number of possible signal-matrix pairs. We therefore introduce an additional loss term based on the {\it minimum single-RE weighted squared Euclidean distance} (MSWSED), which characterizes the worst-case WSED among the superimposed signal-point pairs on each RE without explicitly enumerating all higher-order signal-matrix pairs. Let $p$ and $q$ denote the indices of two superimposed signal points in the $k$-th RE, and let ${x}_{l_{k,i},k}^{(p)}$ represent the signal transmitted by UE $l_{k,i}$ corresponding to the $p$-th superimposed point on RE $k$. The MSWSED of the $k$-th RE is defined as
\begin{equation}\label{MSWSED}
	\begin{split}
		d^2_{\mathrm{MSWSED},k} = \min_{\forall p,q} \frac{\left| \sum_{i=1}^{d_{\mathrm{f}}}  {x}_{l_{k,i},k}^{(p)} - \sum_{i=1}^{d_{\mathrm{f}}} {x}_{l_{k,i},k}^{(q)} \right|^2}{\sum_{i=1}^{d_{\mathrm{f}}}\left|{x}_{l_{k,i},k}^{(p)}-{x}_{l_{k,i},k}^{(q)}\right|^2}.
	\end{split}
\end{equation}

Accordingly, we re-define the extended loss function as
\begin{equation}\label{Rician_loss3}
	\begin{split}
		\Delta_{\mathrm{c,ext}}= \Gamma_{\mathrm{c}}^{(d_\mathrm{v})}+\alpha\cdot\Gamma_{\mathrm{c}}^{(d_\mathrm{v}+1)}+\beta\cdot e^{ -\kappa\sum_{k=1}^{K}d^2_{\mathrm{MSWSED},k}}
	\end{split}
\end{equation}
and the corresponding log-domain extended loss function as
\begin{equation}\label{Rician_loss3}
	\begin{split}
		\tilde{\Delta}_{\mathrm{c,ext}}=\mathrm{max}^*\left(\tilde{\Delta}_{\mathrm{c}},\ln\beta-\kappa\sum_{k=1}^{K}d^2_{\mathrm{MSWSED},k}\right),
	\end{split}
\end{equation}
where \(0 \leq \beta \leq 1\) denotes a hyperparameter that controls the contribution of the additional term to the extended loss function. The summation of the MSWSED values accounts for higher-order error events with \(\delta>d_{\mathrm v}+1\) without requiring explicit enumeration of all corresponding signal-matrix pairs. Since the MSWSED on each involved RE provides a lower bound on the corresponding WSED contribution, increasing the MSWSED values raises these lower bounds across different error events. Consequently, the associated higher-order PEP contributions, which become increasingly important in the strong-LOS regime, can be suppressed, thereby improving the BER performance. Accordingly, the MSWSED values are computed and incorporated into the loss evaluation in Algorithm 1 for subsequent DNN-based codebook optimization.

\subsection{Codebook Design for AWGN and Uplink Rayleigh Fading Channels}

The uplink Rayleigh fading channel can be regarded as a special case of the uplink Rician fading channel with $\kappa = 0$, where the LOS component is absent. Without loss of generality, by setting $\mu = 0$ and $\sigma^2 = 1$, the CD expression in (\ref{CD_3}) reduces to
\begin{equation}\label{CD_4}
	\begin{aligned}
			d^2_{\mathrm{c}}(\mathbf{S},\mathbf{\hat{S}})&=2\prod_{k\in\rho(\mathbf{S},\mathbf{\hat{S}})}\left[\sum_{t\in \varrho(\mathbf{s}_{k},\hat{\mathbf{s}}_{k})}\left|s_{t,k}-\hat{s}_{t,k}\right|^{2}\right],
		\end{aligned}
\end{equation}
which is identical to the well-known {\it product distance} (PD) expression for fast Rayleigh fading channels reported in the literature \cite{Proddis}, \cite{STTC_Vucetic}.

As discussed in \cite{Proddis}, SCMA codebook design for uplink Rayleigh fading channels can be interpreted as the construction of a $d_\mathrm{f}\times K$ space-time code for fast Rayleigh fading channels, where the design criterion focuses on maximizing the minimum PD (MPD) of the overall signal matrix when $\delta=d_\mathrm{v}$. 
It is shown through (\ref{up_RicianUB_1}) that the contributions of terms with $\delta>d_{\mathrm{v}}$ are strongly suppressed by the SNR-dependent factor $\left( {\frac{E_{\mathrm{s}}}{2d_{\mathrm{f}} N_\mathrm{0}}}\right)^{-\delta}$ in the union bound under the high-SNR assumption. Since cases with $\delta > d_{\mathrm{v}}$ occur only when more than one UE transmits distinct codewords in the corresponding signal-matrix pair, only a single MCB needs to be designed and can then be shared by all UEs in the uplink Rayleigh fading case. This design choice avoids UE-dependent codebook optimization and simplifies the DNN architecture.
For single-MCB design, the DNN output dimension is reduced to $I_\mathrm{c}=2Md_\mathrm{v}$. In addition, Algorithm 1 evaluates the loss function only for the cases with $\delta=d_\mathrm{v}$, i.e., $\tilde{\Delta}_{\mathrm{c}}= \tilde{\Gamma}_{\mathrm{c}}^{(d_\mathrm{v})}$. 

On the other hand, as the Rician $K$-factor $\kappa$ increases, the LOS component becomes dominant and the effect of multipath fading gradually diminishes. In the large-$\kappa$ regime, the channel behavior approaches that of a deterministic channel, and the system performance is mainly limited by additive noise. Therefore, the AWGN channel can be regarded as a limiting case of the Rician fading channel as $\kappa \rightarrow \infty$. In this case, the channel randomness vanishes, allowing us to set $\sigma^2 = 0$ and normalize the deterministic channel gain to $\mu = 1$. By substituting these parameters into (\ref{up_RicianMV}) and (\ref{up_RicianUPEP_1}), the PEP for the AWGN channel can be derived as
\begin{equation}
	\begin{split}
		\label{up_AWGNPEP}
		\begin{aligned}
			P(\mathbf{S} &\rightarrow \mathbf{\hat{S}}) \leq \frac{1}{2}  \exp\left(-\frac{E_{\mathrm{s}}}{4d_{\mathrm{f}}N_\mathrm{0}} \sum\limits_{k\in\rho(\mathbf{S},\mathbf{\hat{S}})}\left| \sum\limits^{d_\mathbf{f}}_{i=1} \Delta x_{l_{k,i},k}  \right|^2 \right)\\
			&\equiv \frac{1}{2}  \exp\left(-\frac{E_{\mathrm{s}}}{4d_{\mathrm{f}}N_\mathrm{0}} \norm{ \mathbf{y}^{(p)}-\mathbf{y}^{(q)}  }^2 \right),
		\end{aligned}
	\end{split}
\end{equation}
which is identical to the PEP between two length-$K$ superimposed signal vectors, denoted by $\mathbf{y}^{(p)}$ and $\mathbf{y}^{(q)}$. The $p$-th superimposed signal vector $\mathbf{y}^{(p)}$ is defined as $\mathbf{y}^{(p)}\equiv[\sum_{i=1}^{d_\mathrm{f}} x_{l_{1,i},1}^{(p)}, \sum_{i=1}^{d_\mathrm{f}}x_{l_{2,i},2}^{(p)}, \ldots, \sum_{i=1}^{d_\mathrm{f}}x_{l_{K,i},K}^{(p)}]$, with $M^L$ possible vectors in total. The SED between superimposed signal vectors, denoted as $\norm{ \mathbf{y}^{(p)}-\mathbf{y}^{(q)}  }^2$, governs the PEP. Consequently, to improve the BER performance in the high-SNR region, the MSED among all superimposed signal vector pairs should be maximized.




Since gradient descent minimizes the loss function, the loss should be inversely related to the MSED among superimposed signal-vector pairs. The enhanced metric in \cite{Proddis} further accounts for the nearest superimposed signal-vector pairs with small SED values. Motivated by these considerations, let \(\mathfrak{y}\) denote the index set of superimposed signal-vector pairs corresponding to the $\Upsilon$ smallest SED values. Accordingly, the AWGN loss function is defined as
\begin{equation}\label{AWGN_loss_upd}
	\begin{split}
		\begin{aligned}
		\Gamma_{\mathrm{AWGN}}&= \sum_{(p,q)\in \mathfrak{y}}\exp{\left(-\omega\norm{ \mathbf{y}^{(p)}-\mathbf{y}^{(q)}  }^2\right)}\\
		&\equiv \sum_{(p,q)\in \mathfrak{y}}\exp{\left(\gamma_{\mathrm{AWGN},p,q}\right)},
		\end{aligned}
	\end{split}
\end{equation}
where $\omega$ is chosen to be sufficiently large, indicating that the design targets the high-SNR regime. Although $\kappa$ is eliminated in (\ref{up_AWGNPEP}) and (\ref{AWGN_loss_upd}), the large value of $\omega$ may still lead to extremely small terms in the AWGN design metric. As a result, log-domain computation remains necessary and can be implemented by sequentially applying the max-star operator to the $\gamma_{\mathrm{AWGN},p,q}$ terms. The resulting log-domain loss function for the AWGN channel, denoted by $\tilde{\Gamma}_{\mathrm{AWGN}}\equiv \ln\Gamma_{\mathrm{AWGN}}$, is then used for DNN training.

Since superimposed signal vector pairs yielding small SED values are not necessarily restricted to cases with small SD values, a joint optimization of the MCBs across all UEs becomes necessary. Accordingly, the number of output neurons of the DNN is again set to $I_\mathrm{c}=2LMd_\mathrm{v}$. 
It is worth emphasizing that, compared with the design process in \cite{Proddis}, the proposed framework eliminates explicit signal sub-constellation selection and separate rotation-angle optimization, since these design aspects are inherently embedded in the generation of $\mathbf{c}^{(t)}$ through the DNN. Moreover, the procedures referred to as single-user observation (SUO) and multiple-user observation (MUO) reported in \cite{Proddis} are jointly optimized within a unified fully connected neural network.


Finally, after completing the codebook design process, the bit-labeling rule for each MCB is determined using the binary switching algorithm (BSA) adopted in our earlier work \cite{Proddis}. The BSA iteratively improves the bit-labeling rule by swapping the binary labels assigned to the mother codewords and retaining only those swaps that reduce the labeling cost. Through this procedure, codeword pairs with higher pairwise error probabilities are preferentially assigned labels with smaller Hamming distances, consistent with the general principle of Gray labeling. 

Due to the large number of possible transmitted signal matrices, jointly optimizing the labeling rules of all UE codebooks would incur substantial computational complexity. Therefore, the labeling rule of each UE codebook is optimized separately, effectively considering only the signal-matrix pairs with \(\delta=d_{\mathrm{v}}\). This provides a practical tradeoff between optimization performance and complexity, while joint labeling and codebook optimization is left for future investigation.

	\section{Numerical Results}
	We consider a system configuration of 6 UEs, 4 REs, an overloading rate of 150$\%$, $d_\mathrm{f}=3$, and $d_\mathrm{v}=2$. The maximum Doppler shift and the angle of arrival are respectively set as $f_\mathrm{d} = 100$ Hz and $\theta_0 = 30^o$, and $\kappa$ varies from 0 to 20. 
	The $\frac{E_{\mathrm{b}}}{N_\mathrm{0}}$ value of the system is defined as
	\begin{equation}\label{SNR}
		\begin{split}
			\frac{E_{\mathrm{b}}}{N_\mathrm{0}}=\frac{E_{\mathrm{s}}}{N_\mathrm{0}}\cdot \frac{K}{L\cdot\log_2M},
		\end{split}
	\end{equation}
	where $\frac{E_{\mathrm{s}}}{N_\mathrm{0}}$ denotes the SNR value of each RE.
	The number of iterations within the MPA receiver is set to 5.

	
	We first present the performance results over AWGN channels. The SCMA MCBs designed under the power-balanced constraint are provided in Appendix~B-1, and the corresponding signal sub-constellations of each UE on each RE are shown in Fig.~\ref{AWGN_M4_cons}. It can be observed that the signal points are irregularly distributed across different REs. Focusing on UE~1, very closely spaced signal points are assigned on RE~1, causing some superimposed signal points on this RE to become nearly overlapped.
	However, a closer inspection of Appendix~B-1 reveals that the two mother codewords of UE~1 associated with the smallest SED on RE~1 (i.e., the first and second columns) correspond to signal points with a large SED on RE~2. Hence, a sufficiently large SED between the mother codewords can still be preserved. This observation is consistent with the design philosophy in \cite{LP}, where superimposed signal points with zero Euclidean distance are allowed on individual REs, thereby providing greater flexibility in the overall codebook design.


	\begin{figure}[!t]
		\centering
		\includegraphics[width = 3.5in]{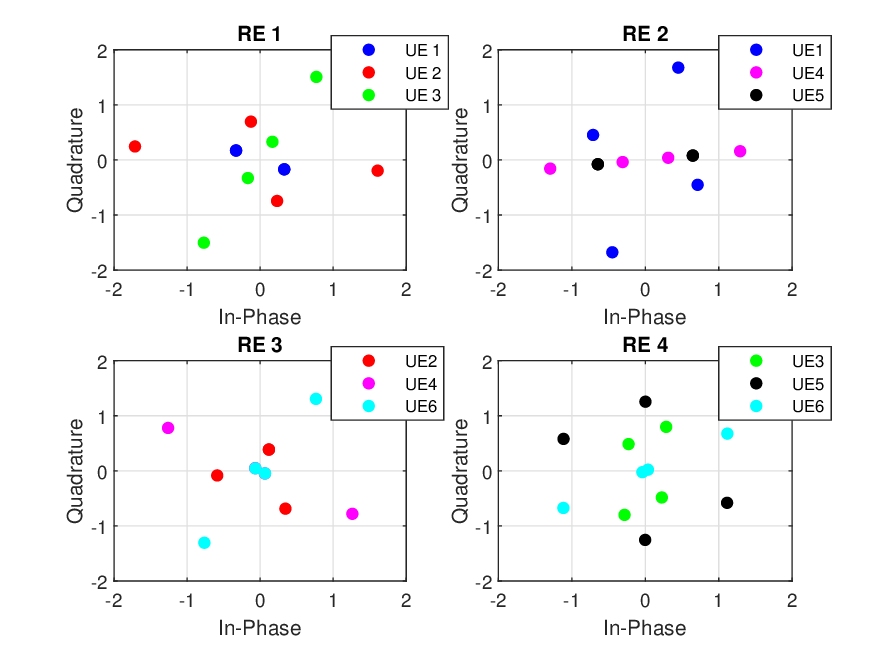}
		\caption{DNN-derived signal sub-constellations on AWGN channels where $M=4$.}
		\label{AWGN_M4_cons}
		\vspace{-0.4cm}
	\end{figure}

	\renewcommand\arraystretch{1.3}
	\tabcolsep=9.2pt
	\begin{table*}\label{Table_AWGN}
			\centering
			\caption{Comparison of design metric values for various schemes over AWGN channels where $M=4$.}
			\begin{tabular*}{18.2cm}{ p{0.8cm}  p{1.01cm} p{1.01cm}  p{0.87cm} p{1.01cm} p{1.05cm} p{1.01cm} p{1.14cm} p{1.01cm} p{1.01cm} p{1.3cm}}
				\toprule
				$ $ & $\mathrm{Prop.}$ & $\mathrm{Prop.}^{\dagger}$ & CB \cite{cpa} & LP$^{\dagger}$ \cite{LP} & DE$^{\dagger}$ \cite{DE} & GA \cite{GA} & HUA \cite{hua} & PI$^{\dagger}$ \cite{PI} & SU \cite{SU} & CHEN \cite{Proddis} \\
				\midrule
				MSED & 3.3816 & 3.5487 & 1.3428 & 3.0016 & 1.7764 & 2.25 & 0.6293 & 2.5931 & 3.3654 & 2.2928 \\
				$\tilde{\Gamma}_{\mathrm{AWGN}}$ & -162.726 & -170.915 & -62.873 & -143.172 & -86.741 & -105.605 & -27.313 & -122.756 & -161.375 & -109.618 \\
				\bottomrule
			\end{tabular*}
			\begin{tablenotes}
				\footnotesize
				\item The superscript $\dagger$ indicates power-imbalanced designs.
			\end{tablenotes}
	\end{table*}

	Several SCMA schemes designed for AWGN channels are selected as benchmark schemes for comparison. As shown in Table~II, the proposed schemes under both power-balanced and power-imbalanced settings achieve the smallest $\tilde{\Gamma}_{\mathrm{AWGN}}$ among all benchmark schemes. Moreover, it is noteworthy that the proposed schemes also attain the largest MSED values among all superimposed signal vector pairs, even though MSED is not explicitly adopted as the design metric.

Consistent with these favorable metric values, Fig.~\ref{AWGN_M4} shows that the proposed codebooks achieve competitive performance compared with existing schemes, including that reported in our earlier work \cite{Proddis}. The biconvex-optimized codebooks in \cite{SU} slightly outperform the proposed design under the considered AWGN setting. The proposed DNN-assisted framework therefore represents an alternative approach to SCMA codebook design, with the advantage of jointly optimizing multiple codebooks within a unified architecture adaptable to different system configurations and channel conditions.
	

	
	\begin{figure}[!t]
		\centering
		\includegraphics[width = 3.5in]{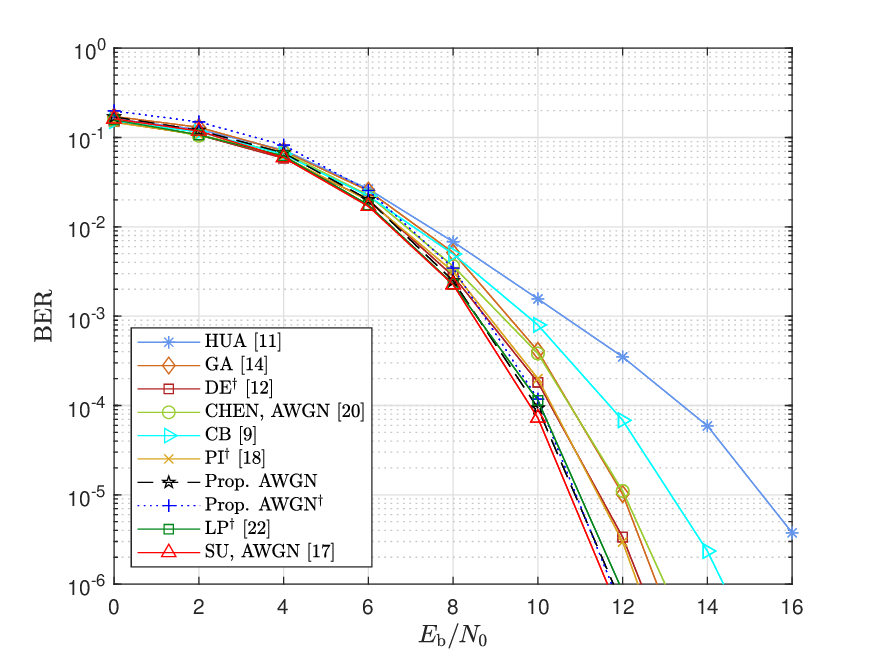}
		\caption{BER performance on the AWGN channel where $M=4$. The superscript $\dagger$ indicates power-imbalanced designs.}
		\label{AWGN_M4}
		\vspace{-0.4cm}
	\end{figure}
	
	\begin{figure}[!t]
		\centering
		\includegraphics[width = 3.5in]{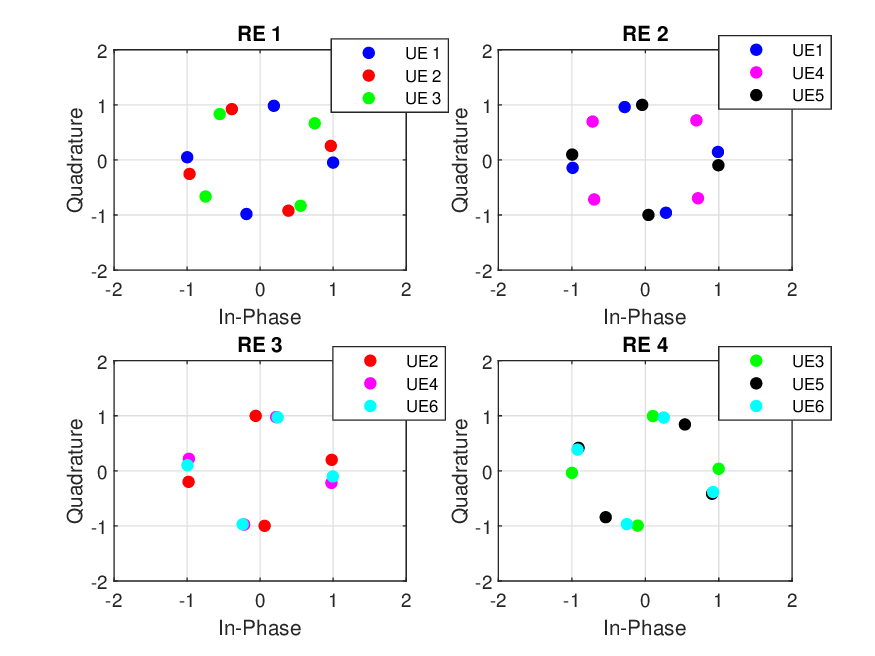}
		\caption{DNN-derived signal constellations for uplink Rayleigh fading channels where $M=4$.}
		\label{Rayleigh_M4_cons}
		\vspace{-0.4cm}
	\end{figure}

	\renewcommand\arraystretch{1.3}
	\tabcolsep=9.2pt
	\begin{table*}\label{Table_ul}
		\centering
		\caption{Comparison of $\tilde{\Delta}_{\mathrm{c}}$ values for various schemes over uplink Rayleigh fading channels.}
		\begin{tabular*}{18cm}{ p{1.1cm}  p{0.5cm} p{1.64cm} p{1.69cm} p{1.54cm} p{1.19cm} p{0.8cm} p{0.93cm} p{0.88cm} p{1.3cm}}
			\toprule
			$M=4$ & $\mathrm{Prop.}$ &  LUO \cite{VM} & \hspace{-0.2cm}CR \cite{CR} & \hspace{-0.3cm}DE \cite{DE} & \hspace{-0.4cm}GA \cite{GA} & \hspace{-0.5cm}HUA \cite{hua} & PI$^{\dagger}$ \cite{PI} & CS \cite{CS} & CHEN \cite{Proddis} \\
			
			& 3.7499 & 3.7519 & \hspace{-0.2cm}3.7748 & \hspace{-0.3cm}3.7791 & \hspace{-0.4cm}634.876 & \hspace{-0.5cm}3.8910 & 4.01 & 4.6148 & 3.7519 \\
			\midrule
			$M=8$ & $\mathrm{Prop.}$ &  Cir-QAM \cite{Proddis} & \hspace{-0.2cm}Star-QAM \cite{Proddis} & \hspace{-0.3cm}Tri-QAM \cite{Proddis} & \hspace{-0.5cm}LUO \cite{VM} &   &   & \\
			& 5.4559 & 5.5467 & \hspace{-0.2cm}5.4633 & \hspace{-0.3cm}5.4626 & \hspace{-0.5cm}5.5526 & & & \\
			\midrule
			$M=16$ & $\mathrm{Prop.}$ &  LUO \cite{VM} & \hspace{-0.2cm}PAM\cite{Proddis} & & &   &   &   & \\
			& 7.2055 & 7.4039 & \hspace{-0.2cm}51.8448 & & & & & & \\
			\bottomrule
		\end{tabular*}
			\vspace{-0.4cm}
	\end{table*}

	
Next, we consider uplink Rayleigh fading channels. For \(M=4\), our previous work \cite{Proddis} employed QPSK sub-constellations. By intentionally optimizing six distinct UE-specific codebooks, the DNN-designed sub-constellations in Fig.~\ref{Rayleigh_M4_cons} naturally exhibit similar QPSK-like and rotated structures, although they are not identical to conventional QPSK constellations. As shown in Table~III, the proposed codebooks achieve a $\tilde{\Delta}_{\mathrm{c}}$ value comparable to that of \cite{Proddis} and the smallest among the considered benchmark schemes. Since the codebooks in \cite{Proddis} already achieve a near-optimal \(\tilde{\Delta}_{\mathrm{c}}\), only a marginal BER improvement is observed in Fig.~\ref{ul_m4}.


For the single-MCB design with \(M=8\), the DNN-derived sub-constellation naturally exhibits a structure closely resembling the Triangular QAM constellation adopted in \cite{Proddis}, with correspondingly similar BER performance shown in Fig.~\ref{ul_m8_m16}a. The purpose of this comparison is primarily to demonstrate that the proposed unified DNN-assisted framework can recover near-optimal designs for the well-studied uplink Rayleigh fading case, rather than to substantially outperform existing Rayleigh-specific schemes. In this scenario, conventional constellations with well-separated signal points already exhibit favorable MPD properties, while the design procedure in \cite{Proddis} was specifically developed for Rayleigh fading and already provides near-optimal performance, leaving limited room for further BER improvement.


For clarity, the LUO curves use the codebooks reported in \cite{VM}, whereas the Cir-QAM, Star-QAM, Tri-QAM, and PAM curves use only the corresponding constellation points, with the codebooks constructed according to \cite{Proddis}. Although LUO yields a relatively small $\tilde{\Delta}_{\mathrm c}$ for $M=8$ in Table~III, its MSD is only 1, resulting in a lower diversity order and the gentler BER slope observed in Fig.~\ref{ul_m8_m16}a. Overall, Table~III and Fig.~\ref{ul_m8_m16} confirm the effectiveness of the proposed DNN-assisted design in the limiting case \(\kappa\rightarrow0\), corresponding to Rayleigh fading.

\begin{figure}[!t]
	\centering
	\includegraphics[width = 3.5in]{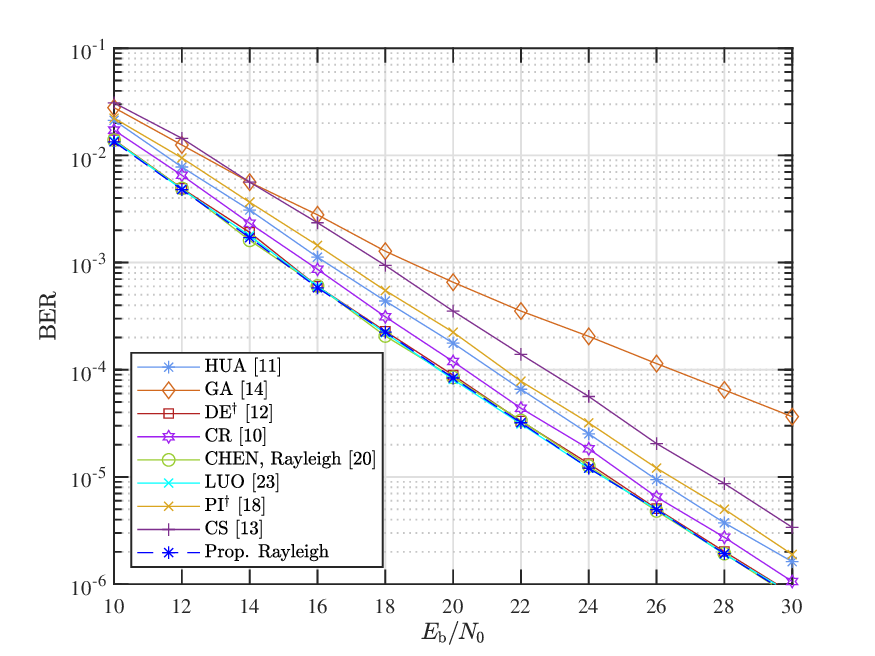}
	\caption{BER performance for uplink Rayleigh fading channels where $M=4$. The superscript $\dagger$ indicates power-imbalanced designs.}
	\label{ul_m4}
	\vspace{-0.4cm}
\end{figure}


\begin{figure}[!t]
\centering
\includegraphics[width = 3.7in]{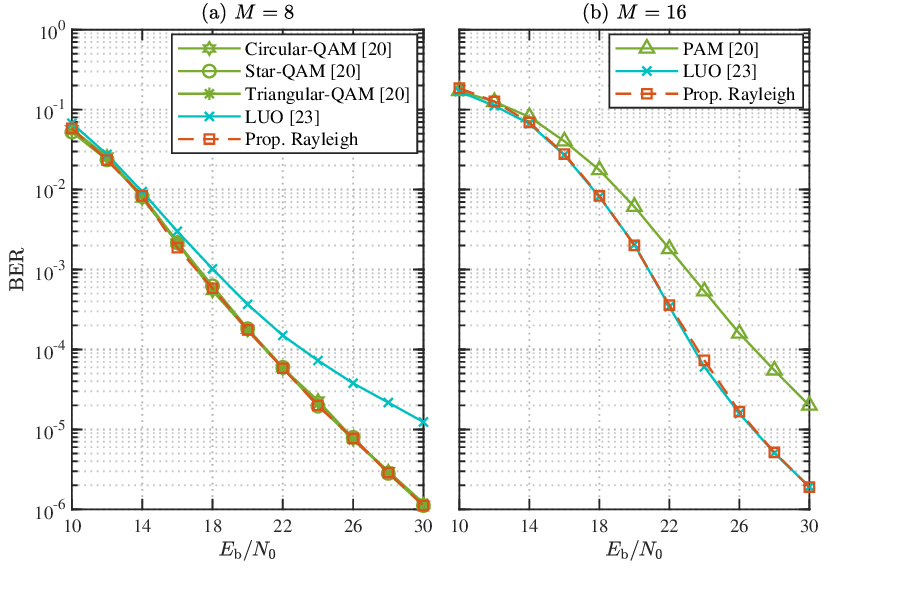}
\caption{BER performance for the uplink Rayleigh fading channel. (a) $M=8$; (b) $M=16$.}
\label{ul_m8_m16}
\vspace{-0.4cm}
\end{figure}

\begin{figure}[!t]
	\centering
	\includegraphics[width = 3.4in]{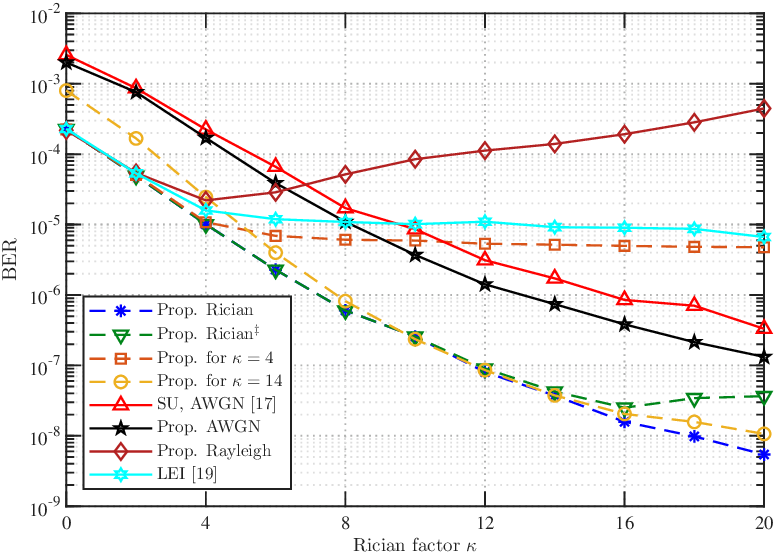}
	\caption{BER performance evaluation of SCMA codebooks over uplink Rician channels with varying $\kappa$ values at $\mathrm{\frac{E_{\mathrm{b}}}{N_\mathrm{0}}}=18$ dB. The superscript $\ddag$ indicates the codebooks designed without incorporating the MSWSED terms.}
	\label{Rician_M4}
	\vspace{-0.4cm}
\end{figure}

Focusing on a relatively high \(E_{\mathrm b}/N_0\) value of \(18\) dB, Fig.~\ref{Rician_M4} illustrates the BER performance of several SCMA codebooks as a function of the Rician factor \(\kappa\). The flattening observed in some BER curves reflects the reduced sensitivity of the corresponding codebooks to further changes in the LOS/NLOS balance as \(\kappa\) increases. The proposed Rayleigh codebooks achieve the best performance at \(\kappa=0\), but their effectiveness gradually degrades as \(\kappa\) increases. In contrast, the proposed AWGN codebooks, together with the SU codebooks reported in \cite{SU}, are expected to perform best in the limiting case of \(\kappa\rightarrow\infty\), while being less favorable in the low-\(\kappa\) regime.

	
	It is worth noting that the codebooks in \cite{Tauf} are specifically designed for uplink Rician fading channels. Instead of using the CD expressions in (\ref{CD_1})-(\ref{CD_3}), \cite{Tauf} introduces the minimum Euclidean distance across resource elements (MED-RE) and derives a simpler but looser PEP upper bound. A three-step procedure is then used to optimize the MCBs and rotation angles by mitigating the effects of small MPD and MED-RE values, their multiplicities, and small MSEDs among superimposed signal-vector pairs. However, since these metrics do not explicitly depend on \(\kappa\), their effectiveness may be limited across varying LOS conditions, especially in the high-\(\kappa\) regime.


	The green dashed curve with inverted-triangle markers in Fig.~\ref{Rician_M4} corresponds to the designed codebooks when the summation term of the MSWSEDs is omitted from the loss function in (\ref{Rician_loss3}). As discussed in Section III-E-2, higher-order error events with \(\delta>d_\mathrm{v}+1\) become increasingly relevant as \(\kappa\) grows. The MSWSED term is therefore introduced to account for their effects without explicitly enumerating all corresponding signal-matrix pairs. Consequently, its removal leads to noticeable degradation in the high-\(\kappa\) regime. The blue dashed curve with asterisk markers represents the near-optimal codebooks designed for different \(\kappa\) values, which consistently outperform the benchmark schemes over a wide range of LOS conditions. For three representative \(\kappa\) values, the corresponding \(\tilde{\Delta}_{\mathrm c}\) values are summarized in Table~IV and closely follow the BER trends in Fig.~\ref{Rician_M4}, further validating the proposed DNN-based design for Rician fading channels.

	\renewcommand\arraystretch{1.3}
	\tabcolsep=9.2pt
	\begin{table*}\label{Table_Rician}
		\centering
		\caption{Comparison of $\tilde{\Delta}_{\mathrm{c}}$ values for various schemes over Rician fading channels with $M=4$ and $\frac{E_{\mathrm{b}}}{N_\mathrm{0}}=18\mbox{ dB}$.}
		\begin{tabular*}{11cm}{ p{0.7cm} p{1.3cm}  p{1.2cm} p{1.42cm} p{1.38cm} p{1.45cm} }
			\toprule
			 $\kappa$ & $\mbox{Prop. Rician}$ & LEI \cite{Tauf} & $\mbox{Prop.  Rayleigh}$ & $\mbox{Prop. AWGN}$  & SU \cite{SU} \\
			\midrule
			4 & -5.7379 & -5.4228 & -5.2005 & 4.8150  & 2.4715 \\
		    12 & -21.0189 & -8.6674 & -5.4112 & -8.9687  & -13.9071 \\
			20 & -37.0446 & -11.5613 & -6.0180 & -12.9190  & -29.5988 \\
			%
			%
			\bottomrule
		\end{tabular*}
\vspace{-0.4cm}
	\end{table*}

To investigate the sensitivity to Rician-factor mismatch, Fig.~\ref{Rician_M4} also evaluates two fixed codebooks designed for \(\kappa=4\) and \(\kappa=14\) while varying the actual \(\kappa\). The results show that each codebook remains effective over a neighboring range of \(\kappa\) values, although the mismatch behavior is asymmetric. Performance degradation is moderate when the actual \(\kappa\) is moderately below the design value, but becomes more pronounced when it exceeds the design value, consistent with the increasing importance of the WSED-related contribution as \(\kappa\) increases. Thus, codebooks designed for larger \(\kappa\) can tolerate a moderate decrease in \(\kappa\), whereas those designed for smaller \(\kappa\) become less effective under stronger LOS conditions. These results suggest that redesign is unnecessary for small variations in \(\kappa\), but becomes increasingly important when the operating \(\kappa\) exceeds the design value.

%

	Figs.~\ref{K4_K8} and \ref{K12_K16} compare the BER performance of various codebooks for \(\kappa=4,8,12,\) and \(16\). At relatively small \(\kappa\), e.g., \(\kappa=4\), the proposed Rician-specific codebooks achieve near-optimal performance, while the Rayleigh and LEI codebooks remain competitive. By contrast, the AWGN and SU codebooks perform less favorably under weak-LOS conditions. As \(\kappa\) increases, corresponding to a stronger LOS component, the effectiveness of the Rayleigh codebooks gradually degrades, whereas the AWGN and SU codebooks become more suitable for the resulting channel conditions and exhibit progressively improved performance. The proposed method maintains near-optimal performance across all four \(\kappa\) values, with a representative MCB for \(\kappa=8\) provided in Appendix~B-2. At a BER of \(10^{-7}\), the proposed designs provide \(E_{\mathrm b}/N_0\) gains of \(4.8\), \(6.61\), and \(7.08\) dB over the LEI codebooks for \(\kappa=8,12,\) and \(16\), respectively, and corresponding gains of \(9.81\), \(6.27\), and \(4.39\) dB over the SU codebooks, demonstrating 
	the effectiveness of the proposed framework and its strong potential for ultra-reliable Rician SCMA transmission.

	

\begin{figure}[!t]
	\centering
	\includegraphics[width = 3.5in]{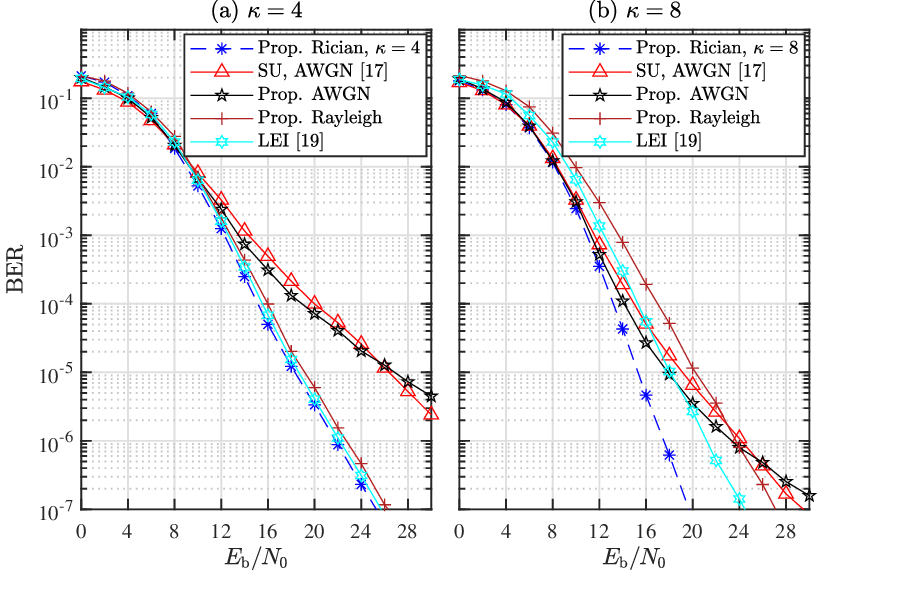}
	\caption{BER performance evaluation of SCMA codebooks over uplink Rician channels with parameters $M=4$, $\kappa=4$ and $8$.}
	\label{K4_K8}
\end{figure}

\begin{figure}[!t]
	\centering
	\includegraphics[width = 3.5in]{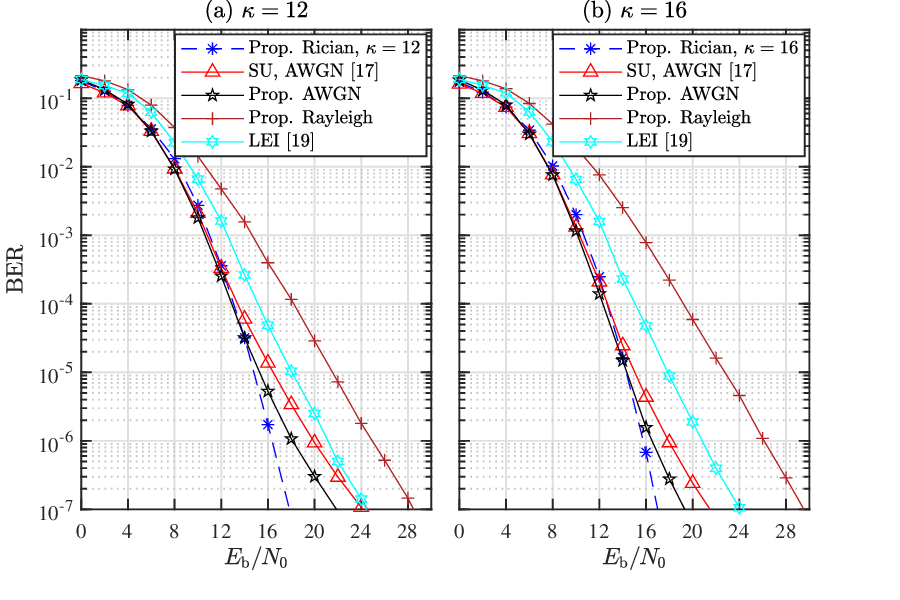}
	\caption{BER performance evaluation of SCMA codebooks over uplink Rician channels with parameters $M=4$, $\kappa=12$ and $16$.}
	\label{K12_K16}
	\vspace{-0.4cm}
\end{figure}
	
	\section{Conclusion}
	In this paper, a novel DNN-assisted framework is proposed for SCMA codebook construction. The DNN is trained in an unsupervised manner, with a loss function derived from the structural characteristics of the generated codebooks, thereby eliminating the need for labeled training data. Based on error-probability analysis, \(\kappa\)-dependent design criteria are derived for uplink Rician fading channels and used to formulate the loss function. These criteria naturally encompass AWGN and Rayleigh fading as limiting cases, providing a unified framework for codebook design. In addition, upper and lower bounds on the average mutual information (AMI) are derived, providing an information-theoretic interpretation of the PEP-based criterion and establishing its asymptotic consistency with the AMI-lower-bound-based criterion in the high-SNR regime. The analysis further reveals how the dominant design factors evolve as the channel transitions from Rayleigh fading to AWGN. Numerical results demonstrate that the proposed framework generates near-optimal codebooks under various channel conditions and achieves significant performance gains over representative benchmarks across a wide range of \(\kappa\) values, particularly in the error-floor region, highlighting its potential for ultra-reliable SCMA applications.

The analysis assumes normalized average received powers across the UEs, corresponding to substantial mitigation of large-scale power differences through uplink power control. Extending the proposed framework to jointly account for residual large-scale fading, power control, and SCMA codebook design constitutes an interesting direction for future work.

	\appendices

	\section{Proofs of the AMI Bounds and High-SNR Analysis}
\label{app:ami}
\noindent\textit{1) Proof of Lemma 1:}
If we define the noiseless received vector associated with $\mathbf{S}$ as $\mathbf{z}(\mathbf{S})=		\sqrt{\frac{E_{\mathrm{s}}}{d_{\mathrm f}}}
\left[
\mathbf{h}_1^{\top}\mathbf{s}_1,\ldots,
\mathbf{h}_K^{\top}\mathbf{s}_K
\right]^{\top}$, with $\mathbf{n}\sim \mathcal{CN}(0,N_0\mathbf{I}_K)$, the received signal in~\eqref{up_Riciank} can be written as $\mathbf{r}=\mathbf{z}(\mathbf{S})+\mathbf{n}$. For any \(\mathbf S\in\mathcal S\), the corresponding conditional PDF of \(\mathbf{r}\) is given by
\begin{equation}
	\label{conditional_pdf}
	p(\mathbf{r}\mid\mathbf{S},\mathbf{H})
	=
	\frac{1}{(\pi N_0)^K}
	\exp\left(
	-\frac{
		\|\mathbf{r}-\mathbf{z}(\mathbf{S})\|^2
	}{N_0}
	\right).
\end{equation}
Consequently, we have
\begin{equation}
	\label{likelihood_ratio}
	\begin{aligned}
		\frac{
			p(\mathbf{r}\mid\hat{\mathbf{S}},\mathbf{H})
		}{
			p(\mathbf{r}\mid\mathbf{S},\mathbf{H})
		}
		&=
		\exp\left[
		-\frac{
			\|\mathbf{r}-\mathbf{z}(\hat{\mathbf{S}})\|^2
			-\|\mathbf{r}-\mathbf{z}(\mathbf{S})\|^2
		}{N_0}
		\right] \\
		&=
		\exp[-d(\mathbf{S},\hat{\mathbf{S}})],
	\end{aligned}
\end{equation}
where
\begin{equation}
	\label{d_pair}
	d(\mathbf{S},\hat{\mathbf{S}})
	=
	\frac{1}{N_0}
	\sum_{k\in\rho(\mathbf{S},\hat{\mathbf{S}})}
	\left[
	\left|
	n_k+\sqrt{\frac{E_s}{d_{\mathrm f}}}\Delta_k
	\right|^2
	-|n_k|^2
	\right]
\end{equation}
and $\Delta_k$ is defined as in \eqref{up_RicianEU}. For fixed \((\mathbf{S},\mathbf{H})\), averaging over \(\mathbf{r}|\mathbf{S},\mathbf{H}\) is equivalent to averaging over \(\mathbf{n}\). Substituting \eqref{likelihood_ratio} and \eqref{d_pair} into \eqref{MI_definition} and averaging over the channel realizations according to \eqref{AMI_exact} then yields
\begin{equation}
	\label{AMI_noise_form}
	\begin{aligned}
		I_{\mathrm{AMI}}
		&=
		L\log_2M
		-\frac{1}{M^L}
		\sum_{\mathbf{S}\in\mathcal{S}}
		\mathrm{E}_{\mathbf{H},\mathbf{n}}
		\left[
		\log_2
		\left(
		\sum_{\hat{\mathbf{S}}\in\mathcal{S}}
		e^{-d(\mathbf{S},\hat{\mathbf{S}})}
		\right)
		\right].
	\end{aligned}
\end{equation}


Since the log-sum-exp function is convex, Jensen's inequality is first applied with respect to \(\mathbf n\) and subsequently with respect to \(\mathbf H\), yielding a lower bound on the expectation term in \eqref{AMI_noise_form} and hence an upper bound on the AMI.

For a fixed pair $(\mathbf S,\hat{\mathbf S})$ and channel realization $\mathbf H$, $\Delta_k$ is deterministic, i.e., independent of the noise realization $\mathbf n$. Since $\mathrm E_{\mathbf n}\left[n_k\right]=0$, the noise-dependent cross term obtained by expanding \eqref{d_pair} vanishes upon averaging over $\mathbf n$. Consequently, we have
\begin{equation}
	\label{noise_avg_distance}
	\mathrm E_{\mathbf n}
	\!\left[
	d(\mathbf S,\hat{\mathbf S})
	\right]
	=
	\lambda\, D(\mathbf S,\hat{\mathbf S}),
\end{equation}
where 
the channel-dependent squared pairwise distance is defined as
\begin{equation}
	\label{D_definition}
	D(\mathbf S,\hat{\mathbf S})
	\triangleq
	\sum_{k\in\rho(\mathbf S,\hat{\mathbf S})}
	|\Delta_k|^2.
\end{equation}

For each fixed channel realization $\mathbf H$, applying Jensen's inequality with respect to $\mathbf n$ to the conditional expectation in \eqref{AMI_noise_form} yields
\begin{equation}
	\label{jensen_step1}
	\mathrm E_{\mathbf n}
	\!\left[
	\log_2
	\sum_{\hat{\mathbf S}\in\mathcal S}
	e^{-d(\mathbf S,\hat{\mathbf S})}
	\,\middle|\,\mathbf H
	\right]
	\geq
	\log_2
	\sum_{\hat{\mathbf S}\in\mathcal S}
	e^{-\lambda D(\mathbf S,\hat{\mathbf S})}.
\end{equation}
Taking the expectation with respect to $\mathbf H$ on both sides of \eqref{jensen_step1} preserves the inequality. By the tower property,	$\mathrm{E}_{\mathbf{H},\mathbf{n}}\left[\cdot \right]=\mathrm{E}_{\mathbf{H}}\left[\mathrm{E}_{\mathbf{n}}\left[\cdot|\mathbf{H}\right]\right]$, the left-hand side becomes the joint expectation in \eqref{AMI_noise_form}, giving
\begin{equation}
	\label{tower_step}
	\mathrm E_{\mathbf H,\mathbf n}
	\!\left[
	\log_2
	\sum_{\hat{\mathbf S}\in\mathcal S}
	e^{-d(\mathbf S,\hat{\mathbf S})}
	\right]
	\geq
	\mathrm E_{\mathbf H}
	\!\left[
	\log_2
	\sum_{\hat{\mathbf S}\in\mathcal S}
	e^{-\lambda D(\mathbf S,\hat{\mathbf S})}
	\right].
\end{equation}
From the Rician channel model in \eqref{up_RicianB}-\eqref{up_RicianMV},
$\mathrm E_{\mathbf H}\!\left[|\Delta_k|^2\right]=|\mu_k|^2+2\sigma_k^2$.
Accordingly, the average pairwise distance is defined as
\begin{equation}
	\label{average_distance}
	\begin{aligned}
		\bar D(\mathbf S,\hat{\mathbf S})
		&\triangleq \mathrm{E}_{\mathbf H}\left[D(\mathbf{S},\hat{\mathbf S})\right]=\sum_{k\in\rho(\mathbf S,\hat{\mathbf S})}\mathrm{E}_{\mathbf H}\left[|\Delta_k|^2\right]\\
		&=\sum_{k\in\rho(\mathbf S,\hat{\mathbf S})}	\left(|\mu_k|^2+2\sigma_k^2\right).
	\end{aligned}
\end{equation}
Applying Jensen's inequality a second time, now with respect to $\mathbf H$, to the right-hand side of \eqref{tower_step} yields
\begin{equation}
	\label{jensen_step2}
	\mathrm E_{\mathbf H}
	\!\left[
	\log_2
	\sum_{\hat{\mathbf S}\in\mathcal S}
	e^{-\lambda D(\mathbf S,\hat{\mathbf S})}
	\right]
	\geq
	\log_2
	\sum_{\hat{\mathbf S}\in\mathcal S}
	e^{-\lambda\bar D(\mathbf S,\hat{\mathbf S})}.
\end{equation}
Combining \eqref{tower_step} and \eqref{jensen_step2}, and
substituting the resulting lower bound into \eqref{AMI_noise_form} yields the following upper bound on the AMI:
\begin{equation}
	\label{AMI_upper_bound}
	I_{\mathrm{AMI}}
	\leq I_{\mathrm{UP}}
	=
	L\log_2M
	-
	\frac{1}{M^L}
	\sum_{\mathbf S\in\mathcal S}
	\log_2
	\left[
	\sum_{\hat{\mathbf S}\in\mathcal S}
	e^{-\lambda\bar D(\mathbf S,\hat{\mathbf S})}
	\right].
\end{equation}
Finally, substituting the definition of \(\bar D(\mathbf S,\hat{\mathbf S})\) in \eqref{average_distance} into \eqref{AMI_upper_bound} yields \eqref{AMI_UP_lamma1}, completing the proof of Lemma 1.
\\
\\
\noindent\textit{2) Proof of Lemma 2:}
To derive a lower bound on the AMI, we first obtain an upper bound on the expectation term in \eqref{AMI_noise_form}. Since $\Delta_k=0$ for $k\notin\rho(\mathbf S,\hat{\mathbf S})$,
\eqref{d_pair} can equivalently be expressed over all $K$ REs as
\begin{equation}
	\label{CD_AMI_lb}
	-d(\mathbf S,\hat{\mathbf S})
	=
	\frac{
		\|\mathbf n\|^2
		-
		\|\mathbf n+\mathbf z(\mathbf S)-\mathbf z(\hat{\mathbf S})\|^2
	}{N_0}.
\end{equation}
For notational convenience, define
$\mathbf{\xi}_{\hat{\mathbf S}}\triangleq\mathbf z(\mathbf S)-\mathbf z(\hat{\mathbf S})$.
Then, we have
\begin{equation}
	\label{factor_exp_step}
	\sum_{\hat{\mathbf S}\in\mathcal S}
	e^{-d(\mathbf S,\hat{\mathbf S})}
	=
	e^{\|\mathbf n\|^2/N_0}
	\sum_{\hat{\mathbf S}\in\mathcal S}
	e^{-\|\mathbf n+\mathbf{\xi}_{\hat{\mathbf S}}\|^2/N_0}.
\end{equation}
Taking $\log_2(\cdot)$ on both sides and averaging over
$\mathbf n$ yields
\begin{equation}
	\label{log_expectation_step}
	\begin{aligned}
		&\mathrm E_{\mathbf n}\!\left[
		\log_2
		\left(
		\sum_{\hat{\mathbf S}\in\mathcal S}
		e^{-d(\mathbf S,\hat{\mathbf S})}
		\right)\right] \\
		&=
		\frac{K}{\ln 2}
		+
		\mathrm E_{\mathbf n}\!\left[
		\log_2
		\left(
		\sum_{\hat{\mathbf S}\in\mathcal S}
		e^{-\|\mathbf n+\mathbf{\xi}_{\hat{\mathbf S}}\|^2/N_0}
		\right)\right],
	\end{aligned}
\end{equation}
where $\mathrm E_{\mathbf n}[\|\mathbf n\|^2]=KN_0$. Since \(\log_2(\cdot)\) is concave, applying Jensen's inequality to the second term gives
\begin{equation}
	\label{jensen_lb_n_full}
	\begin{aligned}
		&\mathrm E_{\mathbf n}\!\left[
		\log_2\!\left(
		\sum_{\hat{\mathbf S}\in\mathcal S}
		e^{-d(\mathbf S,\hat{\mathbf S})}
		\right)\right] \\
		&\overset{(a)}{\leq}
		\frac{K}{\ln 2}
		+
		\log_2\!\left[
		\sum_{\hat{\mathbf S}\in\mathcal S}
		\mathrm E_{\mathbf n}
		\left\{
		e^{-\|\mathbf n+\mathbf{\xi}_{\hat{\mathbf S}}\|^2/N_0}
		\right\}\right] \\
		&\overset{(b)}{=}
		K(\log_2 e-1)
		+
		\log_2\!\left[
		\sum_{\hat{\mathbf S}\in\mathcal S}
		e^{-\frac{\lambda}{2}
			\sum_{k\in\rho(\mathbf S,\hat{\mathbf S})}
			|\Delta_k|^2}
		\right],
	\end{aligned}
\end{equation}
where $(a)$ follows from Jensen's inequality. To obtain $(b)$, we use the Gaussian integral
\begin{equation}
	\label{eq:gaussian_expectation}
	\mathrm E_{\mathbf n}
	\left[
	e^{-\|\mathbf n+\mathbf{\xi}_{\hat{\mathbf S}}\|^2/N_0}
	\right]
	=
	2^{-K}e^{-\|\mathbf{\xi}_{\hat{\mathbf S}}\|^2/(2N_0)},
\end{equation}
which follows by completing the square. In addition,
\begin{equation}
	\label{eq:a_norm}
	\|\mathbf{\xi}_{\hat{\mathbf S}}\|^2
	=
	\frac{E_{\mathrm{s}}}{d_{\mathrm f}}
	\sum_{k\in\rho(\mathbf S,\hat{\mathbf S})}
	|\Delta_k|^2,
\end{equation}
and hence $\|\mathbf{\xi}_{\hat{\mathbf S}}\|^2/(2N_0)=(\lambda/2)\sum_{k\in\rho(\mathbf S,\hat{\mathbf S})}|\Delta_k|^2$. Combining \eqref{eq:gaussian_expectation} and \eqref{eq:a_norm} yields the expression in $(b)$. Using the Rician channel model in \eqref{up_RicianB}-\eqref{up_RicianMV}, we define the per-RE pairwise factor as
\begin{equation}
	\label{Rician_pair_factor}
	\begin{aligned}
		\psi_k
		&\triangleq
		\mathrm E_{\mathbf H}
		\left[
		e^{-\frac{\lambda}{2}|\Delta_k|^2}
		\right] \\
		&=
		\frac{1}{1+\lambda\sigma_k^2}
		\exp\!\left[
		-\frac{\lambda|\mu_k|^2}
		{2(1+\lambda\sigma_k^2)}
		\right].
	\end{aligned}
\end{equation}
Under the assumed independence of the channel coefficients across the involved REs, the expectation of the corresponding pairwise exponential term can be factorized as
\begin{equation}\label{Rician_pair_product}
	\begin{aligned}
		&\mathrm E_{\mathbf H}\!\left[
		\exp\!\left(
		-\frac{\lambda}{2}
		\sum_{k\in\rho(\mathbf S,\hat{\mathbf S})}
		|\Delta_k|^2
		\right)
		\right] \\
		&=\prod_{k\in\rho(\mathbf S,\hat{\mathbf S})}
		\mathrm E_{\mathbf H}\!\left[
		\exp\!\left(
		-\frac{\lambda}{2}|\Delta_k|^2
		\right)
		\right]=
		\prod_{k\in\rho(\mathbf S,\hat{\mathbf S})}\psi_k
		\triangleq
		\Psi(\mathbf S,\hat{\mathbf S}).
	\end{aligned}
\end{equation}
Taking the expectation of \eqref{jensen_lb_n_full} with respect to $\mathbf H$
and applying Jensen's inequality to the concave logarithm yields
\begin{equation}
	\label{jensen_lb_H_exact}
	\begin{aligned}
		&\mathrm E_{\mathbf H,\mathbf n}\!\left[
		\log_2\!\left(
		\sum_{\hat{\mathbf S}\in\mathcal S}
		e^{-d(\mathbf S,\hat{\mathbf S})}
		\right)\right] \\
		&\leq
		K(\log_2e-1)
		+
		\log_2\!\left[
		\sum_{\hat{\mathbf S}\in\mathcal S}
		\Psi(\mathbf S,\hat{\mathbf S})
		\right],
	\end{aligned}
\end{equation}
where \eqref{Rician_pair_product} has been used to evaluate the channel expectation. 
Since the transmitted signal matrices are equiprobable, averaging \eqref{jensen_lb_H_exact} over $\mathbf S\in\mathcal S$ and applying Jensen's inequality once more to the resulting average of the concave logarithm yields
\begin{equation}
	\label{jensen_S_avg}
	\begin{aligned}
		&\frac{1}{M^L}\sum_{\mathbf S\in\mathcal S}\mathrm E_{\mathbf H,\mathbf n}\!\left[
		\log_2\!\left(
		\sum_{\hat{\mathbf S}\in\mathcal S}
		e^{-d(\mathbf S,\hat{\mathbf S})}
		\right)\right] \\
		&\leq
		K(\log_2e-1)+
		\log_2\!\left[
		\frac{1}{M^L}
		\sum_{\mathbf S\in\mathcal S}
		\sum_{\hat{\mathbf S}\in\mathcal S}
		\Psi(\mathbf S,\hat{\mathbf S})
		\right].
	\end{aligned}
\end{equation}
Substituting \eqref{jensen_S_avg} into \eqref{AMI_noise_form} and noting that $-\log_2(M^{-L})=L\log_2M$, the resulting AMI lower bound can be expressed as
\begin{equation}
	\label{AMI_lower_bound}
	\begin{aligned}
		I_{\mathrm{AMI}}
		\geq I_{\mathrm{LB}}
		&=
		2L\log_2M
		-K(\log_2e-1) \\
		&\quad
		-\log_2\!\left[
		\sum_{\mathbf S\in\mathcal S}
		\sum_{\hat{\mathbf S}\in\mathcal S}
		\Psi(\mathbf S,\hat{\mathbf S})
		\right].
	\end{aligned}
\end{equation}
Substituting the definitions of $\psi_k$ and $\Psi(\mathbf S,\hat{\mathbf S})$ in \eqref{Rician_pair_factor} and \eqref{Rician_pair_product} into \eqref{AMI_lower_bound} yields \eqref{AMI_LB_lamma2}, completing the proof of Lemma 2.
\\
\\
\noindent\textit{3) High-SNR Asymptotic Interpretation:}
For a distinct pair $\hat{\mathbf S}\neq\mathbf S$ and sufficiently high SNR such that $\lambda\sigma_k^2\gg1$ for all \(k\in\rho(\mathbf S,\hat{\mathbf S})\), \eqref{Rician_pair_factor} reduces to
\begin{equation}
	\label{AMI_high_SNR_CD}
	\psi_k
	\simeq
	\frac{1}{\lambda\sigma_k^2}
	\exp\!\left(
	-\frac{|\mu_k|^2}{2\sigma_k^2}
	\right).
\end{equation}
Hence, for
$\delta=|\rho(\mathbf S,\hat{\mathbf S})|$, substituting \eqref{AMI_high_SNR_CD} into \eqref{Rician_pair_product} and using the critical-distance definition in \eqref{CD_1} gives
\begin{equation}
	\label{AMI_high_SNR_Psi}
	\Psi(\mathbf S,\hat{\mathbf S})
	\simeq
	\frac{2 \lambda^{-\delta}}
	{d_c^2(\mathbf S,\hat{\mathbf S})}.
\end{equation}
Using the definition of \(\Psi(\mathbf S,\hat{\mathbf S})\) in \eqref{Rician_pair_product}, \eqref{AMI_high_SNR_Psi} is equivalent to \eqref{AMI_high_SNR_lamma} in Section III-D, thereby establishing the high-SNR relation used therein.

To further identify the codebook-dependent contribution to the AMI lower bound, the double sum in \eqref{AMI_lower_bound} can be decomposed into diagonal and off-diagonal terms as
\begin{equation}
	\label{AMI_pair_decomp}
	\sum_{\mathbf S}
	\sum_{\hat{\mathbf S}}
	\Psi(\mathbf S,\hat{\mathbf S})
	=
	M^L+
	\sum_{\mathbf S}
	\sum_{\hat{\mathbf S}\neq\mathbf S}
	\Psi(\mathbf S,\hat{\mathbf S}).
\end{equation}
Since \(\Psi(\mathbf S,\mathbf S)=1\), the diagonal terms contribute the constant $M^L$ and are independent of the codebook design. Hence, only the off-diagonal terms determine the codebook-dependent behavior of the AMI lower bound. From \eqref{AMI_high_SNR_Psi}, each off-diagonal term exhibits the same high-SNR dependence on the diversity order and critical distance as the corresponding PEP derived in Section III-C. Consequently, suppressing the dominant PEP contributions over the complete set of distinct signal-matrix pairs also reduces the codebook-dependent sum in the AMI lower bound. When all pairwise error events are taken into account, the union-bound-based and AMI-lower-bound-based design objectives are asymptotically consistent.

	\section{Representative SCMA Codebooks}\label{app}
	\renewcommand{\thesubsection}{\thesection.\arabic{subsection}}
	\renewcommand{\thesubsubsection}{\thesubsection.\arabic{subsubsection}}

	Two representative codebook designs for 6 UEs, 4 REs and $M=4$ are presented below. In each matrix, the four columns correspond to the codewords labeled $00$, $01$, $10$, and $11$, respectively. 
	Additional designs are available at \url{https://ymchen-nycu.github.io/DNN_Assisted_CBs.xlsx}.
	
	The codebooks listed below represent feasible solutions obtained from the proposed DNN-assisted optimization under the corresponding channel conditions and should not be interpreted as unique solutions, since the numerical optimization may converge to different configurations depending on the initialization and optimization trajectory.
\\
\\
	\noindent\textit{1) Power-Balanced AWGN Codebooks:}
	$$
	\footnotesize
	\mathcal{X}_1\hspace{-0.12cm}:\hspace{-0.12cm}\left[
	\begin{smallmatrix}
	0.3303 - 0.1707i & 0.3296 - 0.1713i & -0.3290 + 0.1686i & -0.3307 + 0.1728i\\
	0.7124 - 0.4524i & 0.4487 + 1.6756i & -0.4494 - 1.6772i & -0.7117 + 0.4528i\\
	0.0000 + 0.0000i &  0.0000 + 0.0000i &  0.0000 + 0.0000i &  0.0000 + 0.0000i\\
	0.0000 + 0.0000i &  0.0000 + 0.0000i &  0.0000 + 0.0000i &  0.0000 + 0.0000i
	\end{smallmatrix}
	\right]
	$$
	$$
	\footnotesize\mathcal{X}_2\hspace{-0.12cm}:\hspace{-0.12cm}\left[
	\begin{smallmatrix}
	-0.1264 + 0.6946i & -1.7136 + 0.2438i &  1.6083 - 0.1949i & 0.2318 - 0.7454i\\
	0.0000 + 0.0000i &  0.0000 + 0.0000i &  0.0000 + 0.0000i &  0.0000 + 0.0000i\\
	0.1207 + 0.3882i & -0.5878 - 0.0839i & 0.1191 + 0.3851i & 0.3462 - 0.6867i\\
	0.0000 + 0.0000i &  0.0000 + 0.0000i &  0.0000 + 0.0000i &  0.0000 + 0.0000i
	\end{smallmatrix}
	\right]
	$$
	$$
	\footnotesize\mathcal{X}_3\hspace{-0.12cm}:\hspace{-0.12cm}\left[
	\begin{smallmatrix}
	0.7696 + 1.5064i & -0.1684 - 0.3293i & -0.7710 - 1.5035i &  0.1681 + 0.3292i\\
	0.0000 + 0.0000i &  0.0000 + 0.0000i &  0.0000 + 0.0000i &  0.0000 + 0.0000i\\
	0.0000 + 0.0000i &  0.0000 + 0.0000i &  0.0000 + 0.0000i &  0.0000 + 0.0000i\\
	0.2249 - 0.4837i & -0.2831 - 0.7996i & -0.2282 + 0.4859i &  0.2842 + 0.7983i
	\end{smallmatrix}
	\right]
	$$
	$$
	\footnotesize\mathcal{X}_4\hspace{-0.12cm}:\hspace{-0.12cm}\left[
	\begin{smallmatrix}
	0.0000 + 0.0000i &  0.0000 + 0.0000i &  0.0000 + 0.0000i &  0.0000 + 0.0000i\\
	1.2915 + 0.1574i & -1.2950 - 0.1578i &  0.3117 + 0.0386i & -0.3086 - 0.0388i\\
	0.0646 - 0.0475i & -0.0671 + 0.0489i & -1.2604 + 0.7789i &  1.2621 - 0.7801i\\
	0.0000 + 0.0000i &  0.0000 + 0.0000i &  0.0000 + 0.0000i &  0.0000 + 0.0000i
	\end{smallmatrix}
	\right]
	$$
	$$
	\footnotesize
	\mathcal{X}_5\hspace{-0.12cm}:\hspace{-0.12cm}\left[
	\begin{smallmatrix}
	0.0000 + 0.0000i &  0.0000 + 0.0000i  & 0.0000 + 0.0000i &  0.0000 + 0.0000i\\
	-0.6453 - 0.0789i & -0.6475 - 0.0793i &  0.6484 + 0.0798i &  0.6460 + 0.0782i\\
	0.0000 + 0.0000i &  0.0000 + 0.0000i  & 0.0000 + 0.0000i &  0.0000 + 0.0000i\\
	-1.1135 + 0.5798i & -0.0022 - 1.2539i &  0.0019 + 1.2555i &  1.1134 - 0.5806i
	\end{smallmatrix}
	\right]
	$$
	$$
	\footnotesize
	\mathcal{X}_6\hspace{-0.12cm}:\hspace{-0.12cm}\left[
	\begin{smallmatrix}
	0.0000 + 0.0000i &  0.0000 + 0.0000i &  0.0000 + 0.0000i &  0.0000 + 0.0000i\\
	0.0000 + 0.0000i &  0.0000 + 0.0000i &  0.0000 + 0.0000i &  0.0000 + 0.0000i\\
	0.0664 - 0.0463i & -0.7638 - 1.3058i & -0.0656 + 0.0460i &  0.7640 + 1.3066i\\
	-1.1156 - 0.6742i & -0.0395 - 0.0240i &  1.1173 + 0.6757i &  0.0360 + 0.0210i
	\end{smallmatrix}
	\right]
	$$
\\
\noindent\textit{2) Rician Codebooks for $\kappa=8$:}	
$$
\footnotesize
\mathcal{X}_1\hspace{-0.12cm}:\hspace{-0.12cm}\left[
\begin{smallmatrix}
-0.0882 + 0.3823i & 0.0926 - 0.4378i & -0.2409 + 1.1349i & 0.2366 - 1.0794i\\
0.4530 - 1.5281i & 0.1100 - 0.3843i & -0.1479 + 0.5347i & -0.4150 + 1.3778i\\
0.0000 + 0.0000i &  0.0000 + 0.0000i &  0.0000 + 0.0000i &  0.0000 + 0.0000i\\
0.0000 + 0.0000i &  0.0000 + 0.0000i &  0.0000 + 0.0000i &  0.0000 + 0.0000i
\end{smallmatrix}
\right]
$$
$$
\footnotesize\mathcal{X}_2\hspace{-0.12cm}:\hspace{-0.12cm}\left[
\begin{smallmatrix}
0.1028 - 0.2230i & 0.3899 - 1.3381i & -0.1029 + 0.2230i & -0.3897 + 1.3381i\\
0.0000 + 0.0000i &  0.0000 + 0.0000i &  0.0000 + 0.0000i &  0.0000 + 0.0000i\\
1.3204 + 0.4407i & 0.2407 + 0.0429i & -1.3202 - 0.4406i & -0.2411 - 0.0429i\\
0.0000 + 0.0000i &  0.0000 + 0.0000i &  0.0000 + 0.0000i &  0.0000 + 0.0000i
\end{smallmatrix}
\right]
$$
$$
\footnotesize\mathcal{X}_3\hspace{-0.12cm}:\hspace{-0.12cm}\left[
\begin{smallmatrix}
1.3082 + 0.3263i & 0.3394 + 0.0851i & -1.3082 - 0.3263i & -0.3395 - 0.0850i\\
0.0000 + 0.0000i &  0.0000 + 0.0000i &  0.0000 + 0.0000i &  0.0000 + 0.0000i\\
0.0000 + 0.0000i &  0.0000 + 0.0000i &  0.0000 + 0.0000i &  0.0000 + 0.0000i\\
-0.2268 + 0.0974i & 1.2936 - 0.5704i & 0.2267 - 0.0975i & -1.2937 + 0.5703i
\end{smallmatrix}
\right]
$$
$$
\footnotesize\mathcal{X}_4\hspace{-0.12cm}:\hspace{-0.12cm}\left[
\begin{smallmatrix}
0.0000 + 0.0000i &  0.0000 + 0.0000i &  0.0000 + 0.0000i &  0.0000 + 0.0000i\\
-0.6226 - 0.1263i & 0.6308 + 0.1330i & -1.1413 - 0.2809i &  1.1330 + 0.2741i\\
0.4883 - 1.3991i & 0.1304 - 0.3690i & -0.1749 + 0.4958i & -0.4437 + 1.2721i\\
0.0000 + 0.0000i &  0.0000 + 0.0000i &  0.0000 + 0.0000i &  0.0000 + 0.0000i
\end{smallmatrix}
\right]
$$
$$
\footnotesize
\mathcal{X}_5\hspace{-0.12cm}:\hspace{-0.12cm}\left[
\begin{smallmatrix}
0.0000 + 0.0000i &  0.0000 + 0.0000i &  0.0000 + 0.0000i &  0.0000 + 0.0000i\\
1.3869 + 0.4777i & 0.4080 + 0.1276i & -1.2719 - 0.4210i & -0.5231 - 0.1844i\\
0.0000 + 0.0000i &  0.0000 + 0.0000i  & 0.0000 + 0.0000i &  0.0000 + 0.0000i\\
0.4619 - 0.2145i & -0.3458 + 0.1558i & 1.0845 - 0.4903i & -1.2008 + 0.5491i
\end{smallmatrix}
\right]
$$
$$
\footnotesize
\mathcal{X}_6\hspace{-0.12cm}:\hspace{-0.12cm}\left[
\begin{smallmatrix}
0.0000 + 0.0000i &  0.0000 + 0.0000i &  0.0000 + 0.0000i &  0.0000 + 0.0000i\\
0.0000 + 0.0000i &  0.0000 + 0.0000i &  0.0000 + 0.0000i &  0.0000 + 0.0000i\\
-1.0422 - 0.3802i & 0.3735 + 0.1359i &  1.0924 + 0.3934i & -0.4236 - 0.1490i\\
0.5890 + 1.3117i & -0.6499 - 1.4520i & 0.2245 + 0.5041i & -0.1635 - 0.3637i
\end{smallmatrix}
\right]
$$

\end{document}